\newcolumntype{P}[1]{>{\centering\arraybackslash}p{#1}}
\newtheorem{theorem}{Theorem}
\newtheorem{lemma}[theorem]{Lemma}
\newtheorem{proposition}[theorem]{Proposition}
\theoremstyle{remark}
\begin{document}

\title{CoSTAP: Clutter Suppression in Co-Pulsing FDA-STAP}

\author{Wanghan Lv, \IEEEmembership{Member, IEEE} and Kumar Vijay Mishra, \IEEEmembership{Senior Member, IEEE}
\thanks{This work was supported by the National Natural Science Foundation of China under Grant 62001116 and the Natural Science Foundation of the Jiangsu Higher Education Institutions of China under Grant 23KJB510008. The conference precursor of this work was presented at the 2022 Asilomar Conference on Signals, Systems, and Computers (ACSSC) \cite{WLv2022_1}.}
\thanks{W. L. is with the College of Computer and Information Engineering, Nanjing Tech University, Nanjing 211816, China, e-mail: lwanghan@njtech.edu.cn. }
\thanks{K. V. M. is with The University of Maryland, College Park, MD 20742 USA, e-mail: mishra@umd.edu.}
}

\maketitle

\IEEEpeerreviewmaketitle

\begin{abstract}
Range-dependent clutter suppression poses significant challenges in airborne frequency diverse array (FDA) radar, where resolving range ambiguity is particularly difficult. Traditional space-time adaptive processing (STAP) techniques used for clutter mitigation in FDA radars operate in the physical domain defined by first-order statistics. In this paper, unlike conventional airborne uniform FDA, we introduce a space-time-range adaptive processing (STRAP) method to exploit second-order statistics for clutter suppression in the newly proposed co-pulsing FDA radar. This approach utilizes \textit{c}o-prime frequency offsets (FOs) across the elements of a \textit{c}o-prime array, with each element transmitting at a non-uniform \textit{c}o-prime pulse repetition interval (C-Cube). By incorporating second-order statistics from the co-array domain, the co-pulsing STRAP or \textit{CoSTAP} benefits from increased degrees of freedom (DoFs) and low computational cost while maintaining strong clutter suppression capabilities. However, this approach also introduces significant computational burdens in the coarray domain. To address this, we propose an approximate method for three-dimensional (3-D) clutter subspace estimation using discrete prolate spheroidal sequences (DPSS) to balance clutter suppression performance and computational cost. We first develop a 3-D clutter rank evaluation criterion to exploit the geometry of 3-D clutter in a general scenario. Following this, we present a clutter subspace rejection method to mitigate the effects of 
interference such as jammer. Compared to existing FDA-STAP algorithms, our proposed CoSTAP method offers superior clutter suppression performance, lower computational complexity, and enhanced robustness to interference. Numerical experiments validate the effectiveness and advantages of our method.
\end{abstract}

\begin{IEEEkeywords}
Clutter suppression, co-pulsing radar, discrete prolate spheroidal sequences (DPSS), frequency diverse array (FDA), space-time adaptive processing (STAP).
\end{IEEEkeywords}

\section{Introduction}
Space-time adaptive processing (STAP) has garnered significant attention in military applications such as airborne reconnaissance, early warning, and surveillance \cite{JWard1995,BTang2016}. This technique jointly processes spatial samples from an array of antenna elements with the fast- and slow-time temporal samples of target echoes. Extensive studies have demonstrated that STAP effectively removes co-existing clutter and other types of interference (such as jamming, clustered Dopplers, range-angle spreading), thereby enhancing detection of low-Doppler targets \cite{WLMelvin2004,CChen2007}. However, modern military applications increasingly employ high-speed platforms, where the radar transmits waveforms at high pulse repetition frequency (PRF) to mitigate Doppler ambiguity. But this also leads to severe range ambiguity problems \cite{MXing2011}, wherein near- and far-range clutter coexist in the same range cell. As a result, existing STAP approaches fail to align the aliasing clutter thereby significantly degrading clutter suppression performance.

Recently, a flexible beam scanning array known as a frequency diverse array (FDA) has been proposed. It features range-angle dependent beampatterns that can suppress range-ambiguous clutter and improve detection performance \cite{PAntonik2006,PBaizert2006}. By exploiting the extra degrees of freedom (DoFs) in the range domain, FDA radar outperforms phased-array (PA) radar in parameter estimation \cite{JXu2015}, mainlobe jammer mitigation \cite{JXu2015_R2}, and range-ambiguous clutter suppression \cite{JXu2015_1}. An adaptive range-angle-Doppler processing approach was proposed in \cite{JXu2017} for airborne FDA-MIMO radar systems with severe range ambiguity problems, where the maximum resolvable number of range ambiguities was derived in closed form.

Considering the importance of clutter rank in distinguishing clutter subspace from noise subspace, clutter rank evaluation is crucial for reducing computational complexity in STAP. Several clutter rank evaluation rules have been developed for uniform linear arrays (ULA) in FDA radars \cite{YYan2020,KWang2022}. Recently, co-prime FDA structures have been proposed to provide enhanced performance in parameter estimation without changes in size, weight, power consumption, or cost \cite{PPVaidyanathan2011,MWang2017,CZhou2018}. These FDA structures exploit the coarray concept to achieve higher DoFs \cite{SQin2017,ZMao2022}. Furthermore, \cite{WLv2022} introduced a co-pulsing FDA radar that offers significantly large DoFs for target localization in the range-azimuth-elevation-Doppler domain. In this paper, we focus on \textit{CoSTAP}, i.e., STAP for co-pulsing FDA radar.

While several STAP algorithms have been developed for clutter suppression using co-prime arrays \cite{XWang2018,XWang2020}, these methods cannot be directly applied to co-prime FDA in the presence of range-ambiguous clutter because the clutter spectrum in the classical co-prime STAP model is range independent. Moreover, while the co-prime scheme enhances aperture and increases DoFs in the virtual domain, it also leads to a rapid increase in computational burden and the number of required training samples, which are often deficient in practice. To balance increased DoFs with computational complexity, we propose exploiting the well-known prolate spheroidal wave function (PSWF) \cite{}. 

Foundational theory of PSWFs appeared in a series of seminal papers by Slepian, Pollack, and Landau \cite{DSlepian1961prolate,HJLandau1962,DSlepian1978}. In particular, discrete prolate spheroidal sequences (DPSSs) \cite{DSlepian1978} are the discrete versions of PSWFs or Slepian sequences. The DPSSs are a collection of orthogonal bandlimited sequences that are most concentrated in time to a given index range and yield a highly efficient basis for sampled bandlimited functions. These DPSS characteristics have lent their applications in spectral estimation \cite{APapoulis1975,MHayes1983} and wireless communications channel estimation \cite{TZemen2005}. For correlated MIMO channel representation, DPSS vectors are selected as a suitable predefined basis to avoid the frequency leakage effect of Fourier basis expansion \cite{OLongoria2011}. Further, modulated DPSS frames have been proposed for fast fading channels to preserve sparsity and enhance estimation accuracy \cite{ESejdic2008}.

In radar signal processing, there is a rich heritage of research on employing DPSS for clutter suppression. In \cite{CYChen2008}, DPSSs were used to approximate the clutter subspace in airborne MIMO radar, significantly reducing the computational complexity compared to traditional STAP methods. To further reduce the complexity of computing the signal representation using DPSS, \cite{ZZHU2018} proposed rapid orthogonal approximate Slepian transform (ROAST) method based on fast Slepian transform (FST). The computational burden of ROAST is comparable to fast Fourier transform (FFT). Utilizing FST, \cite{LOsadciw2021} proposed an enhanced STAP method through Slepian transform and achieved better clutter cancellation than sample matrix inversion (SMI) or eigendecomposition STAP. Besides STAP, other related radar applications such as subspace radar resolution cell rejection \cite{JBosse2018} and multipath suppression \cite{BPDay2020} have also demonstrated effective use of DPSS.

Unlike existing DPSS-based clutter suppression methods, we focus on the application of DPSS in CoSTAP, proposing a two-stage DPSS-based clutter suppression method. We first introduce a novel space-time-range adaptive processing (STRAP) filter for clutter suppression using co-pulsing FDA radar, which offers the advantages of larger aperture and increased DoFs. We evaluate and analyze the clutter rank of airborne co-pulsing FDA radar for different number of range ambiguities in a general case. To mitigate the computational burden, we propose an algorithm for approximating the clutter subspace of FDA using PSWF. Note that, unlike the FDA-MIMO STAP method in \cite{JXu2017}, our aforementioned CoSTAP does not employ a MIMO array.

Preliminary results of this work appeared in our conference publication \cite{WLv2022_1}, which introduced the conceptual ideas behind CoSTAP. In this paper, we present a comprehensive description of DPSS-based CoSTAP, including a more generalized closed form of three-dimensional (3-D) clutter rank, theoretical guarantees, and comprehensive numerical evaluation. Our main contributions are:\\
\textbf{1) DPSS-based clutter subspace approximation for co-pulsing FDA.} Our previous work \cite{WLv2022} introduced co-pulsing FDA radar with the goal of reducing resources such as sensors, spectrum, and dwell time. In this paper, we consider co-pulsing FDA radar on an airborne platform for clutter suppression and propose a STRAP method in the coarray domain, namely CoSTAP technique. With the merit of increased number of virtual array elements and pulses in coarray domain, we achieve better performance in terms of clutter characterization capability and output signal-to-interference-plus-noise ratio (SINR) compared with the traditional uniform FDA-STAP in the physical domain. However, the rank of 3-D clutter subspace arising out of joint space-time-range dimension in the 3-D coarray domain is dramatically high and, therefore, affects both the complexity and the convergence of the 3-D STAP algorithm. To make a trade-off between performance and complexity, we propose an approximate method of 3-D clutter subspace estimation in co-pulsing FDA system by leveraging upon DPSS functions thereby gaining a computational cost advantage. Prior works on reduced-rate measurements using DPSS showed recovery of under-sampled multiband signals using multiband modulated DPSS dictionary. However, DPSSs remain relatively unexamined for coarray radar applications.\\
\textbf{2) Generalization of clutter rank evaluation criterion.} Clutter rank is a critical evaluation criterion for analyzing clutter subspace. Traditional STAP literature indicates that clutter subspace has a small rank \cite{JWard1995,WFeng2017,ACombernoux2019}, which makes clutter rank estimation in the reduced-dimension (RD) domain essential for classical STAP algorithms. In FDA radar, clutter rank analysis has typically been confined to the spatial domain, specifically the joint transmit-receive dimension \cite{KWang2022}. This work extends the analysis to the 3-D correlation domain by incorporating the pulse dimension. Unlike \cite{KWang2022}, which requires the spacing ratio between transmit and receive arrays (represented by the ratio $\beta$ between platform Doppler frequency and PRF) to be an integer, our generalized closed form of the 3-D clutter rank is derived without any constraints on $\beta$, based on co-prime number theory.\\
\textbf{3) Mitigation of additional clustered interference.} Classical STAP only suppresses white-noise jammer interference, which is spread across the entire frequency region and originates from a specific angle \cite{WLMelvin2004}. However, STAP struggles with barrage interference that have clustered Dopplers, ranges, and angles, which are present across large regions and overlap with targets \cite{QLiu2020,SCheng2023}. These strong undesired interference sources can mask the desired targets. To address this issue, we explore interference subspace approximation using DPSS to reject these types of interference from the covariance matrix of the received signal. Additionally, we derive a modulated DPSS construction method under the 3-D Kronecker structure scenario, which incorporates the clutter information in space-time-range coarray domain. 

The rest of the paper is organized as follows. In the next section, we introduce the signal model of airborne co-prime FDA radar in the physical domain. Section \ref{sec:3-D_Adaptive} converts the signal model from physical domain to correlation domain and proposes a 3-D CoSTAP algorithm. In Section \ref{sec:DPSS}, we explore the clutter subspace and its rank in the correlation domain. Using DPSS, we construct a data-independent basis for clutter signals. Meanwhile, to enhance the CoSTAP performance, an additional subspace rejection method is exploited to eliminate all clustered components in an undesired area. In Section \ref{sec:NE}, we demonstrate the performance of our proposed method through numerical examples. We conclude in Section \ref{sec:Conclusions}.

\emph{Notations} Throughout this paper, boldface lowercase letters, such as $\mathbf{x}$, stand for vectors, and boldface uppercase letters, such as $\mathbf{X}$, stand for matrices. $\mathbf{x}[n]$ represents the $n$-th element of the vector $\mathbf{x}$. We denote the transpose, conjugate, and Hermitian by $(\cdot)^T$, $(\cdot)^*$ and $(\cdot)^H$, respectively; $\otimes$, $\odot$, and $\diamond$ represent the Kronecker, Khatri-Rao, and Hadamard products, respectively; and $\mathrm{vec}(\cdot)$ is the vectorization operator that turns a matrix into a vector by stacking all columns on top of the another. The notation $\lfloor \cdot \rfloor$ indicates the greatest integer smaller than or equal to the argument; $\mathrm{E}[\cdot]$ is the statistical expectation function; $\forall x$ means that for all $x$.  $\mathrm{gcd}(M,N)$ returns the greatest common divisor of the integers $M$ and $N$. $\langle\cdot, \cdot\rangle$ stands for the inner product.

\section{Signal Model}               \label{sec:SM}
Consider an airborne side-looking co-prime FDA transceiver that employs a $P_s$-sensor array and $K$-pulse train in a coherent processing interval (CPI). The frequency offsets (FOs) of the transmitted frequency diverse signal and the sensor spacings share the same co-prime pattern, which is characterized by the co-prime integers set $\mathcal{S}_s= \{M_s i,0\leq i \leq N_s-1 \} \cup \{N_s j,1\leq j \leq 2M_s-1 \} $ with the cardinality $\vert \mathcal{S}_s\vert = P_s= N_s+2M_s-1$. $M_s$ and $N_s$ are a pair of co-prime integers with $M_s < N_s$. As such, the set of sensor positions and all carrier frequencies are denoted by $\bm{\xi}=\mathcal{S}_s d$ and $\bm{\mathcal{F}}=f_b+ \mathcal{S}_s \Delta f $, respectively, where $d$ is regarded as the unit inter-element spacing, $f_b$ is the reference frequency, and $\Delta f$ is the unit FO. Similarly, the pulse intervals take the co-prime pattern characterized by the co-prime integers set $\mathcal{S}_t= \{M_t i,0\leq i \leq N_t-1 \} \cup \{N_t j,1\leq j \leq 2M_t-1 \} $ with $\vert \mathcal{S}_t\vert = K= N_t+2M_t-1$, yielding pulse starting time instants $\bm{\eta}= \mathcal{S}_tT$ where $T$ is the unit pulse repetition interval (PRI) for the case of uniform pulsing. Thus, the $m$-th, $0\leq m \leq P_s-1$, transmitted frequency diverse signal is
\begin{align}
s_{m}(t)=A_{m} e^{\mathrm{j}2\pi f_{m}t},~~ 0\leq t\leq T_p,
\end{align}
where $A_m$ is the pulse amplitude, $f_m$ is the $m$-th element of the set $\bm{\mathcal{F}}$, and $T_p$ is the pulse duration. Without loss of generality, set $A_m=1$.

\begin{figure}[t]
\centerline{\includegraphics[scale=0.88]{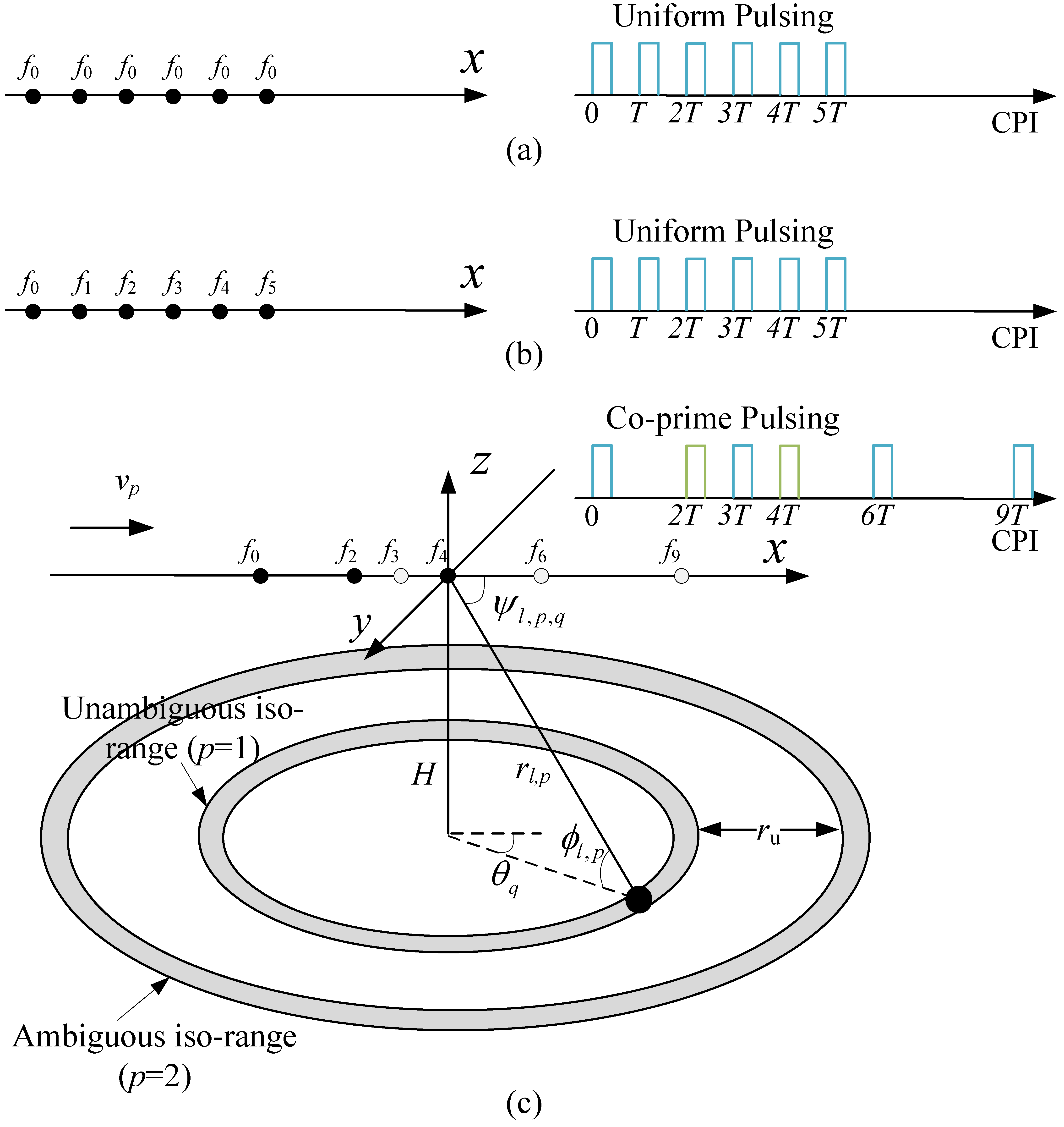}}
\caption{Airborne side-looking array radar geometry with $P_s=6$ elements and transmission sequence with $K=6$ pulses is illustrated for (a) standard STAP, (b) uniform FDA-STAP, and (c) co-pulsing FDA-STAP with $M_s=2$ and $N_s=3$.}
\label{Airborn_CoFDA}
\end{figure}

Fig. \ref{Airborn_CoFDA} shows the geometry of our proposed airborne radar system. The velocity and height of the platform are $\nu_{\mathrm{p}}$ and $H$, respectively. The slant range of the $l$-th range bin in the $p$-th ambiguous range region is 
\begin{align}
r_{l,p} = r_l+ (p-1)r_u,
\end{align}
where $r_l$ is the slant range corresponding to the $l$-th range bin in the principal range region, $p$ is the index of ambiguous range region and $r_u=cT/2$ is the maximum unambiguous range. For an arbitrary clutter scatterer at slant range $r_{l,p}$, azimuth angle $\theta_q$ and elevation angle $\phi_{l,p}$, the two-way time delay of the signal received by the $n$-th, $0 \leq n \leq P_s - 1$ sensor at the $k$-th, $0 \leq k \leq K - 1$, pulse is
\begin{align}
\tau_{n,k} &= \frac{1}{c}\left( 2r_{l,p}- 2\nu_{\mathrm{p}} \eta_k T \cos \psi_{l,p,q} -\xi_n d\cos\psi_{l,p,q}  \right),
\end{align}
where $\psi_{l,p,q}$ is the conic angle of the clutter scatterer satisfying $\cos\psi_{l,p,q}=\cos\theta_q \cos\phi_{l,p}$. $\eta_k$ is the $k$-th element of the set $\bm{\eta}$ and $\xi_n$ is the $n$-th element of the set $\bm{\xi}$. To ensure an array structure deprived of angle ambiguity, the cosine values of conic angles of different cluttter scatterers are assumed to be distinct, namely, 
\begin{align}   \label{eq:SM_R3}
    \cos \psi_{l,p,q} \neq \cos \psi_{l',p',q'},~ \forall~ (l,p,q) \neq (l',p',q').
\end{align}
Then, the received signal of the $k$-the pulse at the $n$-th sensor is 
\begin{align}     
 x_{n,k}(t,r_{l,p},\psi_{l,p,q}) = \sum_{m=0}^{P_s-1}\rho_{l,p,q} e^{\mathrm{j}2\pi f_{m}(t-\tau_{n,k})},
\end{align}
where $\rho_{l,p,q}$ is the complex reflectivity of the point scatterer. 

After demodulation and band-pass filtering, the signals corresponding to the respective frequencies $\{f_m\}_{m=0}^{P_s-1}$ are obtained. Sampling the $k$-th echo at the rate $1/T_p$ in fast-time $t_s = \eta_k T + l T_p$, where $l = 0, 1,\ldots,L - 1 = \lfloor T/T_p \rfloor$, yields the baseband received signal as
\begin{align}   \label{eq:SM_R5}
&x_{k,n,m}(r_{l,p},\psi_{l,p,q}) \nonumber\\
&= \rho_{l,p,q}  e^{ \frac{-\mathrm{j} 4\pi f_m r_{l,p}}{c} }    e^{\mathrm{j}2\pi \frac{ d \cos\psi_{l,p,q}}{\lambda_m}\xi_n}  e^{\mathrm{j} \frac{4\pi v_{\mathrm{p}}\eta_k T \cos\psi_{l,p,q}}{\lambda_m} }    \nonumber \\
&\approx \widetilde{\rho}_{l,p,q}  e^{\frac{-\mathrm{j}4\pi \xi_m \Delta f r_{l,p}}{c}}   e^{\mathrm{j}2\pi \frac{ d \cos\psi_{l,p,q}}{\lambda_b}\xi_n} e^{\frac{\mathrm{j}4\pi \nu_{\mathrm{p}} \eta_k T\cos \psi_{l,p,q}}{\lambda_b}},
\end{align}
where $\lambda_m=c/f_m$, $c$ is the speed of light, and $\widetilde{\rho}_{l,p,q}= \rho_{l,p,q}e^{ \frac{-\mathrm{j} 4\pi f_b r_{l,p}}{c} }$. Our focus is the DoF enhancement at the receive array. Thus, for simplicity, assume that the direction-of-departure (DoD) information about transmit array is embodied in $\rho_{l,p,q}$. The conic angle $\psi_{l,p,q}$ corresponds to the direction-of-arrival (DoA) of the scatterer \cite{SQin2017}.

Define $f_T(r_{l,p}) \triangleq -2\Delta f r_{l,p}/c $, $f_R(\psi_{l,p,q}) \triangleq \frac{d}{\lambda_b}\cos\psi_{l,p,q}$ and $f_d(\psi_{l,p,q}) \triangleq 2\nu_{\mathrm{p}} T \cos \psi_{l,p,q}/{\lambda_b}$. Then, stacking all baseband samples along transmit spatial, receive spatial, and temporal
dimensions, we obtain the vector form of the signal (\ref{eq:SM_R5}) as
\begin{subequations}
\begin{align}             \label{eq:SM6}
 \mathbf{x}(r_{l,p},\psi_{l,p,q}) \!=\! \widetilde{\rho}_{l,p,q} \mathbf{v}(f_T(r_{l,p}),f_d(\psi_{l,p,q}),f_R(\psi_{l,p,q})), 
\end{align}
where 
\begin{align} 
& \mathbf{v}(f_T(r_{l,p}),f_d(\psi_{l,p,q}),f_R(\psi_{l,p,q}))    \nonumber \\
&= \mathbf{a}_T(f_T(r_{l,p}))\otimes \mathbf{b}(f_d(\psi_{l,p,q})) \otimes \mathbf{a}_R(f_R(\psi_{l,p,q})), \\
&\mathbf{a}_T(f_T(r_{l,p}))          \nonumber  \\
&= [1,e^{\mathrm{j}2\pi f_T(r_{l,p})\xi_1},\ldots,e^{j2\pi f_T(r_{l,p})\xi_{P_s-1}}]^T  \in \mathbb{C}^{P_s \times 1}, \label{eq:SM7}\\
&\mathbf{a}_R(f_R(\psi_{l,p,q}))  \nonumber\\
&= [1,e^{\mathrm{j}2\pi f_R(\psi_{l,p,q})\xi_1},\ldots,e^{\mathrm{j}2\pi f_R(\psi_{l,p,q})\xi_{P_s-1}}]^T   \in \mathbb{C}^{P_s \times 1},\\
&\mathbf{b}(f_d(\psi_{l,p,q}))  \nonumber\\
&= [1,e^{\mathrm{j}2\pi f_d(\psi_{l,p,q})\eta_1},\ldots,e^{\mathrm{j}2\pi f_d(\psi_{l,p,q})\eta_{K-1}}]^T     \in   \mathbb{C}^{K \times 1}.
\end{align}
\end{subequations}

Then, the clutter return corresponding to the $l$-th iso-range bin including range ambiguity with noise is 
\begin{align}            \label{eq:SM10}
\mathbf{y}_l = \mathbf{c}_l + \mathbf{n}_l= \sum_{p=1}^{N_p} \sum_{q=1}^{N_c} \mathbf{x}(r_{l,p},\psi_{l,p,q}) +\mathbf{n}_l,
\end{align}
where $\mathbf{c}_l$ is the clutter component, $\mathbf{n}_l$ is the noise component, $N_p$ is the number of range ambiguities and $N_c$ is the number of statistically independent clutter scatterers within the iso-range bin. Considering that the clutter component $\mathbf{c}_l$ varies with slant range, we use secondary range dependence compensation (SRDC) technique \cite{JXu2015} to obtain the independent and identically distributed (i.i.d.) training samples as
\begin{align}
    \widetilde{\mathbf{y}}_l= \sum_{p=1}^{N_p} \sum_{q=1}^{N_c} \widetilde{\rho}_{l,p,q} \mathbf{v}(\tilde{f}_T(p),f_d(\psi_{l,p,q}),f_R(\psi_{l,p,q}))  + \mathbf{n}_l,
\end{align}
where $\mathbf{v}(\tilde{f}_T(p),f_d(\psi_{l,p,q}),f_R(\psi_{l,p,q}))$ is in terms of compensated transmit steering vector $\mathbf{a}_T(\tilde{f}_T(p)) $, receive steering vector $\mathbf{a}_R(f_R(\psi_{l,p,q}))$, and time steering vector $\mathbf{b}(f_d(\psi_{l,p,q}))$ with $\tilde{f}_T(p)=-2\Delta f r_u (p-1)/c$. 

Assume the target range, conic angle, and radial velocity are $r_0, \psi_0$, and $\nu_0$, respectively. The presumed steering vector of target is
\begin{align}
\mathbf{v}_t(p_0,\psi_0,\nu_0) = \mathbf{a}_T(\tilde{f}_T(p_0)) \otimes \mathbf{b}(f_d(\psi_0))\otimes \mathbf{a}_R(f_R(\psi_0)),
\end{align}
where $p_0$ is the ambiguous index of target. The STAP filter weight vector in the physical domain which maximizes the output SINR is \cite{JXu2017}
\begin{align}              \label{eq:PA3}
\mathbf{w} = \frac{\mathbf{R}^{-1} \mathbf{v}_t(p_0,\psi_0,\nu_0)}{\mathbf{v}_t^H(p_0,\psi_0,\nu_0) \mathbf{R}^{-1}\mathbf{v}_t(p_0,\psi_0,\nu_0)},
\end{align}
where $\mathbf{R}$ is the clutter covariance matrix, computed as
\begin{align}     \label{eq:PA2}
 \mathbf{R}    &= \mathrm{E}[\widetilde{\mathbf{y}}_l \widetilde{\mathbf{y}}_l^H]     \notag \\
&= \sum_{p=1}^{N_p} \sum_{q=1}^{N_c} \sigma^2_{p,q}  \mathbf{v}(\tilde{f}_T(p),f_d(\psi_{p,q}),f_R(\psi_{p,q})) \nonumber\\
&~~~~~~~ \times \mathbf{v}^H(\tilde{f}_T(p),f_d(\psi_{p,q}),f_R(\psi_{p,q})) + \sigma_n^2 \mathbf{I}
\end{align}
with $\sigma^2_{p,q}=\mathrm{E}[\vert\widetilde{\rho}_{l,p,q}\vert^2]$ and $\sigma_n^2$ being the power of clutter patch and noise, respectively. In practice, the sample covariance matrix is computed as
\begin{align}       \label{eq:R_PA3}
  \widehat{\mathbf{R}}= \sum_{l=1}^L \widetilde{\mathbf{y}}_l \widetilde{\mathbf{y}}_l^H /L. 
\end{align}

In the physical domain, the weights in (\ref{eq:PA3}) are similar to those used in the context of STRAP for FDA-MIMO \cite{JXu2015_R2}, but differ in the characteristics of space-time-range steering vectors. In the FDA-MIMO case, the direction of departure (DoD) information about the transmit array is exploited. When FDA serves as the transmit array, the enhancement in degrees of freedom (DoF) comes from the sum coarray generation based on classical MIMO techniques \cite{JLi2007} and additional controllable DoFs in the range domain. Since the DoD is equal to the direction of arrival (DoA) in a colocated MIMO scenario and range information is coupled with the DoD, the transmit and receive spatial frequencies are linearly related in the transmit-receive spatial domains, with a shift corresponding to the index of the range region \cite{JXu2017}. In this work, we aim to exploit frequency diversity technique for airborne co-prime array \cite{XWang2020}, where the focus is at the receive end. As a result, for the sake of simplicity, we ignore the effect of waveform diversity and DoD information from transmit array. Thus, with the absence of DoD, the transmit and receive spatial frequencies are independent with each other in the transmit-receive spatial domains. In the next section, we show that the DoFs are enhanced by constructing a difference coarray in space-time-range domain.

\section{Adaptive Range-Angle-Doppler Processing in the correlation domain}                       \label{sec:3-D_Adaptive}
The weight $\mathbf{w}$ in (\ref{eq:PA3}) is designed in the physical domain whose DoFs are restricted by the number of physical elements and pulses. We now propose a 3-D CoSTAP algorithm, which takes advantage of large aperture, low mutual coupling, and increased DoFs in the virtual domain provided by co-prime scheme \cite{WLv2022}, to suppress the clutter via space-time-range processing.

\subsection{Virtual space-time-range snapshot construction}
Using the identity $(\mathbf{A}\otimes \mathbf{B} \otimes \mathbf{C})(\mathbf{D}\otimes \mathbf{E} \otimes \mathbf{F})= (\mathbf{A}\mathbf{D}) \otimes (\mathbf{B}\mathbf{E}) \otimes (\mathbf{C}\mathbf{F})$ \cite{RAHorn1985}, rewrite the outer product of the space-time-range steering vectors in (\ref{eq:PA2}) as
\begin{align}                \label{eq:VS1}
&\mathbf{v}(\tilde{f}_T(p),f_d(\psi_{p,q}),f_R(\psi_{p,q})) \mathbf{v}^H(\tilde{f}_T(p),f_d(\psi_{p,q}),f_R(\psi_{p,q}))                    \notag \\
&=[\mathbf{a}_T(\tilde{f}_T(p)) \otimes \mathbf{b}(f_d(\psi_{p,q}))\otimes \mathbf{a}_R(f_R(\psi_{p,q}))] \nonumber\\ 
&~~~~~ \times [\mathbf{a}_T(\tilde{f}_T(p)) \otimes \mathbf{b}(f_d(\psi_{p,q}))\otimes \mathbf{a}_R(f_R(\psi_{p,q}))]^H      \notag   \\
&=[\mathbf{a}_T(\tilde{f}_T(p))\mathbf{a}_T^H(\tilde{f}_T(p))]\otimes [\mathbf{b}(f_d(\psi_{p,q}))\mathbf{b}^H(f_d(\psi_{p,q}))]\nonumber\\ &~~~~~ \otimes [\mathbf{a}_R(f_R(\psi_{p,q}))\mathbf{a}_R^H(f_R(\psi_{p,q}))].
\end{align}
The entries of $\mathbf{a}_T(\tilde{f}_T(p))\mathbf{a}_T^H(\tilde{f}_T(p))$ take the form of $e^{\mathrm{j}2\pi \tilde{f}_T(p)(\xi_m-\xi_n)}, 0\leq m,n\leq P_s-1$. Following \cite{EBouDaher2015}, the contiguous elements from $e^{-\mathrm{j}2\pi \tilde{f}_T(p)L_s}$ to $e^{\mathrm{j}2\pi \tilde{f}_T(p)L_s}$ are obtained, where $L_s \triangleq M_sN_s+M_s-1$. Similarly, the contiguous elements from $e^{-\mathrm{j}2\pi f_R(\psi_{p,q})L_s}$ to $e^{\mathrm{j}2\pi f_R(\psi_{p,q})L_s}$ are obtained from $\mathbf{a}_R(f_R(\psi_{p,q}))\mathbf{a}_R^H(f_R(\psi_{p,q}))$ and the contiguous elements from $e^{-\mathrm{j}2\pi f_d(\psi_{p,q})L_t}$ to $e^{\mathrm{j}2\pi f_d(\psi_{p,q})L_t}$ can be obtained from $\mathbf{b}(f_d(\psi_{p,q}))\mathbf{b}^H(f_d(\psi_{p,q}))$ where $L_t \triangleq M_tN_t+M_t-1$. Thus, by vectorizing (\ref{eq:PA2}) and picking out the contiguous entries, we can construct the virtual space-time-range snapshot as
\begin{align}                   \label{eq:VS2}
\mathbf{z}= \sum_{p=1}^{N_p} \sum_{q=1}^{N_c} \sigma^{2}_{p,q}  \breve{\mathbf{v}}(\tilde{f}_T(p),f_d(\psi_{p,q}),f_R(\psi_{p,q}))  + \sigma^2_n \mathbf{e},
\end{align}
where $\breve{\mathbf{v}}(\tilde{f}_T(p),f_d(\psi_{p,q}),f_R(\psi_{p,q})) = \breve{\mathbf{a}}_T(\tilde{f}_T(p)) \otimes \breve{\mathbf{b}}(f_d(\psi_{p,q}))\otimes  \breve{\mathbf{a}}_R(f_R(\psi_{p,q}))$ is the virtual steering vector with $\breve{\mathbf{a}}_T(\tilde{f}_T(p)) \in \mathbb{C}^{(2L_s+1)\times 1}$, $\breve{\mathbf{a}}_R(\tilde{f}_R(\psi_{p,q})) \in \mathbb{C}^{(2L_s+1)\times 1}$ and $\breve{\mathbf{b}}(f_d(\psi_{p,q})) \in \mathbb{C}^{(2L_t+1)\times 1}$. $\mathbf{e} = \mathbf{e}_T \otimes \mathbf{e}_d \otimes \mathbf{e}_R\in \mathbb{C}^{(2L_s+1)^2(2L_t+1)\times 1}$ is a vector of all zeros except a 1 at the $\frac{(2L_s+1)^2(2L_t+1)+1}{2}$-th position, $\mathbf{e}_T, \mathbf{e}_R \in \mathbb{C}^{(2L_s+1)\times 1}$ are both vectors of all zeros except a 1 at the $(L_s+1)$-th position, $\mathbf{e}_d \in \mathbb{C}^{(2L_t+1)\times 1}$ is a vector of all zeros except a 1 at the $(L_t+1)$-th position.

\subsection{3-D STAP filter design in the correlation domain}
We first construct the virtual covariance matrix based on the virtual measurement model in (\ref{eq:VS2}). Considering that the clutter patch power \{$\sigma_{p,q}^2\}_{p=1,q=1}^{N_p,N_c}$ behaves like fully coherent sources, a 3-D spatial smoothing method is adopted to enhance the rank of virtual covariance matrix.

For $ 0\leq l_1 \leq L_s$, $0\leq l_2 \leq L_t$ and $ 0 \leq l_3 \leq L_s$, define a subvector of $\mathbf{z}$ as    \par\noindent
\begin{subequations}
\begin{align}            \label{eq:3-D_1}
\mathbf{z}_{l_1,l_2,l_3}  &= \sum_{p=1}^{N_p} \sum_{q=1}^{N_c} \sigma^{2}_{p,q}  \breve{\mathbf{v}}_{l_1,l_2,l_3}(\tilde{f}_T(p),f_R(\psi_{p,q}), f_d(\psi_{p,q}))  \notag \\
 &~~~~~ + \sigma^2_n \mathbf{e}_{l_1,l_2,l_3} 
\end{align}\normalsize
where 
\begin{align}
& \breve{\mathbf{v}}_{l_1,l_2,l_3}(\tilde{f}_T(p),f_R(\psi_{p,q}), f_d(\psi_{p,q}))  \nonumber \\
& ~~~ =\breve{\mathbf{a}}_{T,l_1}(\tilde{f}_T(p)) \otimes \breve{\mathbf{b}}_{l_2}(f_d(\psi_{p,q}))\otimes  \breve{\mathbf{a}}_{R,l_3}(f_R(\psi_{p,q}))
\end{align}
with
\begin{align}
&\breve{\mathbf{a}}_{T,l_1}(\tilde{f}_T(p))  \nonumber \\
&= [e^{-\mathrm{j}2\pi \tilde{f}_T(p)l_1},e^{\mathrm{j}2\pi \tilde{f}_T(p)(1-l_1)},\ldots,e^{\mathrm{j}2\pi f_T(p)(L_s-l_1)}]^T  \nonumber\\
& ~~~  \in \mathbb{C}^{(L_s+1)\times 1}, \\
&\breve{\mathbf{b}}_{l_2}(f_d(\psi_{p,q})) \nonumber\\ 
&= [e^{-\mathrm{j}2\pi f_d(\psi_{p,q})l_2},e^{\mathrm{j}2\pi f_d(\psi_{p,q}) (1-l_2)},\ldots,e^{\mathrm{j}2\pi f_d(\psi_{p,q})(L_t-l_2)}]^T   \nonumber\\
& ~~~ \in \mathbb{C}^{(L_t+1)\times 1}, \\
&\breve{\mathbf{a}}_{R,l_3}(f_R(\psi_{p,q})) \nonumber\\
&= [e^{-\mathrm{j}2\pi f_R(\psi_{p,q})l_3},e^{\mathrm{j}2\pi f_R(\psi_{p,q})(1-l_3)},\ldots,e^{\mathrm{j}2\pi f_R(\psi_{p,q})(L_s-l_3)}]^T \nonumber\\
& ~~~ \in \mathbb{C}^{(L_s+1)\times 1}
\end{align}
\end{subequations}
and $\mathbf{e}_{l_1,l_2,l_3} \triangleq \mathbf{e}_{T,l_1} \otimes \mathbf{e}_{d,l_2}\otimes \mathbf{e}_{R,l_3}$. $\mathbf{e}_{T,l_1} \in \mathbb{C}^{L_s \times 1}$ is a subvector constructed from the $(L_s+1-l_1)$-th entry to the $(2L_s+1-l_1)$-th entry of $\mathbf{e}_T$, $\mathbf{e}_{d,l_2} \in \mathbb{C}^{L_t \times 1}$ is a subvector constructed from the $(L_t+1-l_2)$-th entry to the $(2L_t+1-l_2)$-th entry of $\mathbf{e}_d$, $\mathbf{e}_{R,l_3} \in \mathbb{C}^{L_s \times 1}$ is a subvector constructed from the $(L_s+1-l_3)$-th entry to the $(2L_s+1-l_3)$-th entry of $\mathbf{e}_R$. So we have $\breve{\mathbf{a}}_{T,l_1}(\tilde{f}_T(p)) = \breve{\mathbf{a}}_{T,0}(\tilde{f}_T(p)) e^{-\mathrm{j}2\pi \tilde{f}_T(p)l_1}$, $\breve{\mathbf{b}}_{l_2}(f_d(\psi_{p,q}))= \breve{\mathbf{b}}_{0}(f_d(\psi_{p,q})) e^{-\mathrm{j}2\pi f_d(\psi_{p,q})l_2} $ and $\breve{\mathbf{a}}_{R,l_3}(f_R(\psi_{p,q})) = \breve{\mathbf{a}}_{R,0}(f_R(\psi_{p,q}))e^{-\mathrm{j}2\pi f_R(\psi_{p,q})l_3}$. Then, rewrite (\ref{eq:3-D_1}) as
\begin{align}             \label{eq:3-D_2}
\mathbf{z}_{l_1,l_2,l_3} &= \sum_{p=1}^{N_p} \sum_{q=1}^{N_c} \sigma^{2}_{p,q} e^{-\mathrm{j}2\pi (\tilde{f}_T(p)l_1 + f_d(\psi_{p,q})l_2+ f_R(\psi_{p,q})l_3)} \nonumber\\
&~~~ \times \breve{\mathbf{v}}_{0,0,0}(\tilde{f}_T(p),f_R(\psi_{p,q}), f_d(\psi_{p,q}))+ \sigma^2_n \mathbf{e}_{l_1,l_2,l_3}.
\end{align}

Thus, the spatial smoothing matrix $\mathbf{R}_v$ is computed as
\begin{align}           \label{eq:3-D_4}
\mathbf{R}_v = \sum_{l_1=0}^{L_s}\sum_{l_2=0}^{L_s} \sum_{l_3=0}^{L_t} \mathbf{z}_{l_1,l_2,l_3} \mathbf{z}^H_{l_1,l_2,l_3}.
\end{align}

Following Theorem~\ref{Theo_SSM} states the relationship between the spatial smoothing matrix $\mathbf{R}_v$ and space-time-range steering vectors.
\begin{theorem}      \label{Theo_SSM}
In terms of transmit steering subvector $\breve{\mathbf{a}}_{T,0}(\tilde{f}_T(p))$, receive steering subvector $ \breve{\mathbf{a}}_{R,0}(f_R(\psi_{p,q}))$, and time steering subvector $\breve{\mathbf{b}}_{0}(f_d(\psi_{p,q}))$, the spatial smoothing matrix $\mathbf{R}_v$ as defined in (\ref{eq:3-D_4}) is
\begin{align}               
\mathbf{R}_v &= C^2 \left( \sum_{p=1}^{N_p} \sum_{q=1}^{N_c} \sigma^2_{p,q}  \breve{\mathbf{v}}_{0,0,0}(\tilde{f}_T(p),f_d(\psi_{p,q}),f_R(\psi_{p,q})) \right. \nonumber\\ 
&~~~~~~~~~~~ \left. \breve{\mathbf{v}}_{0,0,0}^H(\tilde{f}_T(p),f_d(\psi_{p,q}),f_R(\psi_{p,q})) + \sigma_n^2 \mathbf{I}    \right)^2, \label{eq:3-D_5}
\end{align}
where $\breve{\mathbf{v}}_{0,0,0}(\tilde{f}_T(p),f_d(\psi_{p,q}),f_R(\psi_{p,q}))= \breve{\mathbf{a}}_{T,0}(\tilde{f}_T(p)) \otimes \breve{\mathbf{b}}_{0}(f_d(\psi_{p,q}))\otimes  \breve{\mathbf{a}}_{R,0}(f_R(\psi_{p,q}))$ and $C$ is a constant.
\end{theorem}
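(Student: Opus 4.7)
The plan is to establish the theorem through a direct entry-wise identification: I will show that, under the Kronecker indexing shared by $\breve{\mathbf{v}}_{0,0,0}$ and the noise indicator $\mathbf{e}_{l_1,l_2,l_3}$, each smoothed snapshot $\mathbf{z}_{l_1,l_2,l_3}$ is (up to the constant $C$) exactly the $(l_1,l_2,l_3)$-th column of the bracketed matrix in (\ref{eq:3-D_5}). Once this is in place, the sum in (\ref{eq:3-D_4}) collapses to $\mathbf{M}\mathbf{M}^H$ via the elementary identity $\sum_l \mathrm{col}_l(\mathbf{M})\,\mathrm{col}_l(\mathbf{M})^H = \mathbf{M}\mathbf{M}^H$, and the Hermitian property of $\mathbf{M}$ finishes the proof.

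First, I would unfold entry $(a_1,a_2,a_3)$ of $\mathbf{z}_{l_1,l_2,l_3}$. Using the shift identities $\breve{\mathbf{a}}_{T,l_1}(\tilde{f}_T(p)) = e^{-\mathrm{j}2\pi \tilde{f}_T(p)l_1}\breve{\mathbf{a}}_{T,0}(\tilde{f}_T(p))$ and the analogous factorizations for $\breve{\mathbf{b}}_{l_2}$ and $\breve{\mathbf{a}}_{R,l_3}$ (which are already recorded just before (\ref{eq:3-D_2})), and invoking the Kronecker mixed-product rule used in (\ref{eq:VS1}), the clutter component at position $(a_1,a_2,a_3)$ reads $\sum_{p,q}\sigma^2_{p,q}e^{\mathrm{j}2\pi[\tilde{f}_T(p)(a_1-l_1)+f_d(\psi_{p,q})(a_2-l_2)+f_R(\psi_{p,q})(a_3-l_3)]}$. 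This is precisely the $(a,l)$-th entry of $\sum_{p,q}\sigma^2_{p,q}\,\breve{\mathbf{v}}_{0,0,0}(p,q)\breve{\mathbf{v}}_{0,0,0}^H(p,q)$, where $a=(a_1,a_2,a_3)$ and $l=(l_1,l_2,l_3)$ are linearized through the same Kronecker ordering.

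Next, I would handle the noise term by inspecting $\mathbf{e}_{l_1,l_2,l_3}=\mathbf{e}_{T,l_1}\otimes \mathbf{e}_{d,l_2}\otimes \mathbf{e}_{R,l_3}$. By the definitions right after (\ref{eq:3-D_1}), the lone unit entry of $\mathbf{e}_{T,l_1}$ sits at position $l_1$ within its $(L_s{+}1)$-dimensional window (with analogous statements for $\mathbf{e}_{d,l_2}$ and $\mathbf{e}_{R,l_3}$), so $\mathbf{e}_{l_1,l_2,l_3}$ is the canonical basis vector indexed by $(l_1,l_2,l_3)$. Its contribution to entry $(a_1,a_2,a_3)$ is therefore $\sigma_n^2\delta_{a,l}$, which matches exactly the noise contribution to the $(a,l)$-th entry of $\sigma_n^2\mathbf{I}$ in the inner bracket of (\ref{eq:3-D_5}). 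Combining the clutter and noise contributions yields $\mathbf{z}_{l_1,l_2,l_3}=C\cdot[\mathbf{R}_c]_{:,l}$, where $\mathbf{R}_c$ denotes the bracketed matrix and $C$ is the constant absorbing the bookkeeping factor (equal to $1$ under the conventions above).

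Finally, because the triple $(l_1,l_2,l_3)\in\{0,\dots,L_s\}\times\{0,\dots,L_t\}\times\{0,\dots,L_s\}$ enumerates each column of $\mathbf{R}_c$ exactly once, the definition (\ref{eq:3-D_4}) becomes $\mathbf{R}_v=C^2\sum_l \mathrm{col}_l(\mathbf{R}_c)\mathrm{col}_l(\mathbf{R}_c)^H=C^2\mathbf{R}_c\mathbf{R}_c^H$; since $\mathbf{R}_c$ is Hermitian by construction, $\mathbf{R}_c\mathbf{R}_c^H=\mathbf{R}_c^2$, which is (\ref{eq:3-D_5}). The main obstacle I anticipate is not conceptual but notational: the three distinct Kronecker orderings used for $\mathbf{z}_{l_1,l_2,l_3}$, for $\mathbf{e}_{l_1,l_2,l_3}$, and for the rows/columns of $\mathbf{R}_c$ must be verified to agree, and the convention placing the noise-indicator unit at position $(l_1,l_2,l_3)$ must be aligned with the phase reference of $\breve{\mathbf{v}}_{0,0,0}$; a careful index-by-index matching (rather than a high-level vectorization argument) is the cleanest way to pin down $C$ and avoid off-by-one slips.
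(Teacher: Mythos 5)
Your proof is correct, and it reaches (\ref{eq:3-D_5}) by a somewhat different route than the paper. The paper substitutes (\ref{eq:3-D_2}) into (\ref{eq:3-D_4}) and expands $\mathbf{z}_{l_1,l_2,l_3}\mathbf{z}_{l_1,l_2,l_3}^H$ into three terms (clutter--clutter, clutter--noise cross terms, noise--noise), then collapses each by recognizing that the sums over $(l_1,l_2,l_3)$ of the shift phase factors reproduce the inner products $\breve{\mathbf{v}}_{0,0,0}^H(p_1,q_1)\breve{\mathbf{v}}_{0,0,0}(p_2,q_2)$ (and that $\sum_{l}\mathbf{e}_{l}\mathbf{e}_{l}^H=\mathbf{I}$), finally reassembling the three pieces into the perfect square. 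You instead identify each smoothed snapshot $\mathbf{z}_{l_1,l_2,l_3}$ as the $(l_1,l_2,l_3)$-th column of $\widetilde{\mathbf{R}}_v=\sum_{p,q}\sigma^2_{p,q}\breve{\mathbf{v}}_{0,0,0}\breve{\mathbf{v}}_{0,0,0}^H+\sigma_n^2\mathbf{I}$ and invoke $\sum_l \mathrm{col}_l(\mathbf{M})\mathrm{col}_l(\mathbf{M})^H=\mathbf{M}\mathbf{M}^H=\mathbf{M}^2$ for Hermitian $\mathbf{M}$ --- this is the column-identification argument of the 2-D coarray result in \cite{CLiu2015} lifted to the 3-D Kronecker setting. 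The two arguments exploit the same shift structure, but yours handles the clutter and noise contributions in one stroke and makes the value $C=1$ explicit, whereas the paper's term-by-term expansion makes the role of each cross term visible (at the cost of a mild abuse in writing the two Hermitian-conjugate cross terms as a single factor of $2$). Your closing caveat about verifying that the Kronecker orderings of $\mathbf{z}_{l_1,l_2,l_3}$, $\mathbf{e}_{l_1,l_2,l_3}$, and the rows/columns of $\widetilde{\mathbf{R}}_v$ agree is well placed: the paper itself is inconsistent on this point (the index ranges in (\ref{eq:3-D_4}) and the pairing of $l_2,l_3$ with $f_R,f_d$ in (\ref{eq:3-D_7}) do not match the ordering declared before (\ref{eq:3-D_1})), so the index-by-index matching you propose is exactly what is needed to make either proof airtight.
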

\begin{proof}
For notational simplicity, we replace the argument $\tilde{f}_T(p)$ by $p$ in boldface letters, e.g., $\breve{\mathbf{a}}_{T,0}(p) \triangleq \breve{\mathbf{a}}_{T,0}(\tilde{f}_T(p))$. Following (\ref{eq:3-D_2}), substituting $\mathbf{z}_{l_1,l_2,l_3}$ into (\ref{eq:3-D_4}) leads to the sum of three terms. The first term is
\begin{align}        \label{eq:3-D_7}
& \sum_{l_1,l_2,l_3} \sum_{p_1,q_1}\sum_{p_2,q_2} \sigma^2_{p_1,q_1} \sigma^2_{p_2,q_2}  e^{-j2\pi (\tilde{f}_T(p_1) - \tilde{f}_T(p_2))l_1}  \notag \\
& \times e^{-j2\pi (f_R(\psi_{p_1,q_1})-f_R(\psi_{p_2,q_2}))l_2}    e^{-j2\pi ( f_d(\psi_{p_1,q_1})-f_d(\psi_{p_2,q_2}))l_3}    \notag \\
&  \times \breve{\mathbf{v}}_{0,0,0}(p_1,q_1)\breve{\mathbf{v}}^H_{0,0,0}(p_2,q_2)         \notag \\
&= \sum_{p_1,q_1}\sum_{p_2,q_2}    \sigma^2_{p_1,q_1} \sigma^2_{p_2,q_2} \breve{\mathbf{v}}_{0,0,0}(p_1,q_1) \breve{\mathbf{v}}_{0,0,0}^H(p_1,q_1)  \notag \\
&~\times \breve{\mathbf{v}}_{0,0,0}(p_2,q_2) \breve{\mathbf{v}}^H_{0,0,0}(p_2,q_2)    \notag \\
&= \left( \sum_{p,q}\sigma^2_{p,q} \breve{\mathbf{v}}_{0,0,0}(p,q) \breve{\mathbf{v}}_{0,0,0}^H(p,q) \right)^2.
\end{align}
The second term is
\begin{align}                \label{eq:3-D_8}
& 2 \sum_{l_1,l_2,l_3} \sum_{p,q} \sigma^{2}_{p,q}\sigma^2_n e^{-j2\pi (\tilde{f}_T(p)l_1+ f_R(\psi_{p,q})l_2 + f_d(\psi_{p,q})l_3)}    \notag \\
&\times \breve{\mathbf{v}}_{0,0,0}(p,q)\mathbf{e}_{l_1,l_2,l_3}       \notag \\
&= 2 \sum_{p,q} \sigma^{2}_{p,q}\sigma^2_n \breve{\mathbf{v}}_{0,0,0}(p,q) \breve{\mathbf{v}}_{0,0,0}^H(p,q).
\end{align}
The third term becomes
\begin{align}                \label{eq:3-D_9}
\sum_{l_1,l_2,l_3} \sigma^4_n \mathbf{e}_{l_1,l_2,l_3}\mathbf{e}_{l_1,l_2,l_3}^H = \sigma^4_n  \mathbf{I}.
\end{align}
Combining (\ref{eq:3-D_7}), (\ref{eq:3-D_8}) and (\ref{eq:3-D_9}) concludes the proof.
\end{proof}

The result in Theorem~\ref{Theo_SSM} is similar to the two-dimensional (2-D) co-prime case stated in \cite{CLiu2015}, where the spatial smoothing matrix is related to space-time steering vectors. Following Theorem~\ref{Theo_SSM}, $\mathbf{R}_v$ in (\ref{eq:3-D_4}) is expressed as $\mathbf{R}_v = C^2 \widetilde{\mathbf{R}}_v^2$, where
\begin{align}                 \label{eq:3-D_9_11}
\widetilde{\mathbf{R}}_v = \sum_{p=1}^{N_p} \sum_{q=1}^{N_c} \sigma_{p,q}^2 \tilde{\mathbf{c}}_{p,q}\tilde{\mathbf{c}}^H_{p,q} + \sigma_n^2 \mathbf{I}   = \widetilde{\mathbf{R}}_c + \mathbf{R}_n,
\end{align}
with $\widetilde{\mathbf{R}}_c  = \sum_{p=1}^{N_p} \sum_{q=1}^{N_c} \sigma_{p,q}^2 \tilde{\mathbf{c}}_{p,q}\tilde{\mathbf{c}}^H_{p,q}$, $\mathbf{R}_n = \sigma_n^2 \mathbf{I}$ and $\tilde{\mathbf{c}}_{p,q}=     \breve{\mathbf{v}}_{0,0,0}(\tilde{f}_T(p),f_d(\psi_{p,q}),f_R(\psi_{p,q}))$.

Thus, the STAP filter weight vector that maximizes the output SINR in the correlation domain is
\begin{align}                 \label{eq:3-D_10}
\mathbf{w}_v = \frac{\widetilde{\mathbf{R}}_v ^{-1} \breve{\mathbf{v}}_t(p_0,\psi_0,\nu_0)}{\breve{\mathbf{v}}_t^H(p_0,\psi_0,\nu_0) \widetilde{\mathbf{R}}_v ^{-1}\breve{\mathbf{v}}_t(p_0,\psi_0,\nu_0)},
\end{align}
where $\breve{\mathbf{v}}_t(p_0,\psi_0,\nu_0) = \breve{\mathbf{a}}_{T,0}(\tilde{f}_T(p_0)) \otimes \breve{\mathbf{b}}_{0}(\nu_0))\otimes  \breve{\mathbf{a}}_{R,0}(f_R(\psi_{0}))$.

Co-array processing entails the merits of more DoFs. However, the size of the matrix $\widetilde{\mathbf{R}}_v $ is $(L_s+1)^2(L_t+1)\times (L_s+1)^2(L_t+1)$, which is significantly larger than its counterparts in the physical domain. Following (\ref{eq:3-D_10}), \textit{a fortiori} high computational complexity arises from the inversion of the large matrix $\widetilde{\mathbf{R}}_v$. To balance DoFs and computational complexity, we now propose a CoSTAP algorithm, wherein we approximate the clutter subspace in space-time-range domain by DPSSs.

\section{DPSS-Based Clutter Suppression}                      
\label{sec:DPSS}
Assume $M_T$ to be a positive integer and define $\Omega \triangleq \{0,1,\ldots,M_T-1\}$. For a given $\sigma$, referred to as the bandlimiting parameter, and each $i \in \Omega$, DPSSs \cite{DSlepian1978,DSlepian1983some} are defined as the discrete eigenfunctions of the eigenvalue problem given by 
\begin{align}                           \label{eq:DPSS_1}
\sum_{m=0}^{M_T-1} \frac{\sin 2\pi \sigma (n-m)}{\pi(n-m)}\bm{\phi}_i^{M_T,\sigma}[m] = \mu_i \bm{\phi}_i^{M_T,\sigma}[n],~~ n \in \Omega.
\end{align}
Equation (\ref{eq:DPSS_1}) is also valid for all integers $n \in \mathbb{Z}$ and it defines the values of the DPSSs on the real line. The $M_T$ eigenvalues are distinct, real and positive numbers indexed in the decreasing order of their magnitude, $1>\mu_0>\mu_1>\ldots>\mu_{M_T-1}>0$ and the corresponding discrete eigenfunctions, namely DPSSs, are denoted by $\bm{\phi}_i^{M_T,\sigma} \in \mathbb{C}^{M_T \times 1}, i=0,1,\ldots,M_T-1$, respectively. The superscripts of $\bm{\phi}_i^{M_T,\sigma}$ mean that the DPSSs are characterized by the $M_T$ and $\sigma$. The double orthogonality property of DPSSs is characterized by the following relations \cite{DSlepian1961prolate}:
\begin{align}
\begin{aligned}                         \label{eq:DPSS_2}
&\left\langle  \bm{\phi}_i^{M_T,\sigma}, \bm{\phi}_l^{M_T,\sigma}\right\rangle_{\Omega}=\sum_{n=0}^{M_T-1} \bm{\phi}_i^{M_T,\sigma}[n] \bm{\phi}_l^{M_T,\sigma}[n]= \delta_{i l}, \\
&\left\langle \bm{\phi}_i^{\infty,\sigma}, \bm{\phi}_l^{\infty,\sigma}\right\rangle_{\mathbb{Z}}=\sum_{n=-\infty}^{\infty} \bm{\phi}_i^{M_T,\sigma}[n] \bm{\phi}_l^{M_T,\sigma}[n]=\delta_{i l}/\mu_i,
\end{aligned}
\end{align}
where $\delta_{i l}$ denotes the Kronecker delta.

\subsection{Clutter rank analysis in the coarray domain}
For the sake of simplicity, denote $\bar{f} \triangleq f_R(\psi_{p,q})$ and $\bar{f}_p \triangleq \tilde{f}_T(p)= -2\Delta f r_u (p-1)/c$. Then, the Doppler frequency is $f_d(\psi_{p,q})= \frac{2\nu_{\mathrm{p}}T}{d} \bar{f}=\beta \bar{f}$, where $\beta \triangleq \frac{2 \nu_{\mathrm{p}}T}{d}$. Without loss of generality, we assume that $\beta$ is a positive rational number given by $\beta= \frac{M}{N}$ such that $M$ and $N$ are positive integers satisfying $\mathrm{gcd}(M,N)=1$. The corresponding compensated transmit steering vector, receive steering vector, and time steering vector of an arbitrary clutter patch are, respectively,
\begin{subequations}
\begin{align}           \label{eq:R_3-D_19}
\breve{\mathbf{a}}_{T,0}(\bar{f}_p )=  [1,e^{\mathrm{j}2\pi \bar{f}_p},\ldots,e^{\mathrm{j}2\pi \bar{f}_p L_s}]^T \in \mathbb{C}^{(L_s+1)\times 1},
\end{align}
\begin{align}
\breve{\mathbf{a}}_{R,0}(\bar{f} )=  [1,e^{\mathrm{j}2\pi \bar{f}},\ldots,e^{\mathrm{j}2\pi\bar{f}L_s}]^T  \in \mathbb{C}^{(L_s+1)\times 1},
\end{align}
and
\begin{align}
\breve{\mathbf{b}}_0(\bar{f}))  = [1,e^{\mathrm{j}2\pi \beta \bar{f}},\ldots,e^{\mathrm{j}2\pi \beta \bar{f}L_t}]^T \in \mathbb{C}^{(L_t+1)\times 1}.
\end{align}
\end{subequations}
Thus, for the $p$-th range region, the transmit-receive-time steering vector is
\begin{align}                  \label{eq:3-D_20}
\breve{\mathbf{v}}_{0,0,0}(\bar{f},\bar{f}_p)
&=\breve{\mathbf{a}}_{T,0}(\bar{f}_p) \otimes \breve{\mathbf{b}}_0(\bar{f})\otimes \breve{\mathbf{a}}_{R,0}(\bar{f} )             \notag \\
&= \breve{\mathbf{a}}_{T,0}(\bar{f}_p) \otimes [\mathbf{P}\mathbf{v}_{Rb}(\bar{f})]                                               \notag \\
&= [\breve{\mathbf{a}}_{T,0}(\bar{f}_p) \otimes \mathbf{P}] \mathbf{v}_{Rb}(\bar{f}),
\end{align}
where
\begin{align}                 \label{eq:3-D_20_P}
\mathbf{P}&=
\left[
\begin{matrix}
1 & & & &      \\
  & \ddots & & &   \\
  & & 1& &   \\
  &\underbrace{0~~~~~~~~~~~~1}_{\beta} & & &    \\
  &  & \ddots & &   \\
  &  &   &1 &   \\
  &  & & \ddots &    \\
0 & 0&\cdots&  &  1  \\
\end{matrix} \right]      \in \mathbb{C}^{(L_s+1)(L_t+1)\times R_r},
\end{align}
is a permutation matrix. In (\ref{eq:3-D_20_P}), the nonzero entries from the $[(l+1)(L_s+1)+1]$-th row to the $(l+2)(L_s+1)$-th row of $\mathbf{P}$ is the $\beta$ right-shifted version of those from the $[l(L_s+1)+1]$-th row to the $(l+1)(L_s+1)$-th row ,where $ 0\leq l \leq L_t $. When $\beta$ is a positive integer, all the entries of $\mathbf{P}$ can be considered to be located on the uniform two-dimensional grids with the grid resolution equal to 1 along the row (column) dimension. When $\beta = \frac{M}{N}$ is a positive fraction, the entries of $\mathbf{P}$ can be regarded to be located on denser uniform or nonuniform two-dimensional grids with the grid resolution equal to 1 along the row dimension and $\frac{1}{N}$ along the column dimension. The rank of $\mathbf{P}$ is $R_r$, namely the number of non-zero columns whose equivalent positions mapping to a filled ULA are denoted as $\bm{\zeta}=[\zeta_0,\zeta_1,\ldots,\zeta_{R_r-1}]\frac{2d}{\lambda_b}$ and $\zeta_0 =0$. Obviously, $R_r$ varies with $\beta$. Then, $\mathbf{v}_{Rb}(\bar{f})$ becomes
\begin{align}
\mathbf{v}_{Rb}(\bar{f}) &= [1,e^{\mathrm{j}2\pi \zeta_1\bar{f}},\ldots,e^{\mathrm{j}2\pi \zeta_{R_r-1}\bar{f}}]^T   \notag \\
                         &= [1,e^{\mathrm{j}2\pi \zeta_1 \frac{2d}{\lambda_b}\frac{\cos \psi}{2} },\ldots,e^{\mathrm{j}2\pi \zeta_{R_r-1} \frac{2d}{\lambda_b}\frac{\cos \psi}{2}}]^T   \notag \\
                         &~~ \in \mathbb{C}^{R_r \times 1}.
\end{align}

Following Theorem~\ref{theo_P} provides the closed-form of rank $R_r$ of $\mathbf{P}$.
\begin{theorem}                           \label{theo_P}
Define a permutation matrix $\mathbf{P} \in \mathbb{C}^{(L_s+1)(L_t+1)\times R_r}$ as displayed in (\ref{eq:3-D_20_P}) and a positive rational number $\beta=\frac{M}{N}$ where $\mathrm{gcd}(M,N)=1$. When $\frac{2d}{\lambda_b}=N$, the rank of $\mathbf{P}$ has the closed form
\begin{align}               
R_r&= \mathrm{rank}(\mathbf{P})            \notag \\
&= 
\begin{cases}
N(L_s+1)+M(L_t+1) \!\!-\!\! M N, ~ M\! <\! L_s \!+\!1 \\
                            \quad \quad \quad \quad \quad \quad\quad \quad \quad \quad \quad \quad \mathrm{and}~ N\!<\! L_t \!+\!1, \\
(L_s+1)(L_t+1),  ~ M \! \geq \! L_s \!+\!1 ~ \mathrm{or}~ N \! \geq \! L_t\!+\!1 .
\end{cases}           
\end{align}
\end{theorem}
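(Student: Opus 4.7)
\medskip
\noindent\textbf{Proof proposal.}
The plan is to recast the rank question as a combinatorial counting problem and then evaluate it by a one-variable residue class argument.

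\textbf{Step 1 (Reduction to counting a sumset).}
Inspecting the structure displayed in (\ref{eq:3-D_20_P}), each row of $\mathbf{P}$ has exactly one nonzero entry, which sits in the column indexing the virtual position $l_1+\beta l_2=(Nl_1+Ml_2)/N$ attached to the entry $(l_1,l_2)$ of $\breve{\mathbf{b}}_0(\bar f)\otimes\breve{\mathbf{a}}_{R,0}(\bar f)$. Two distinct nonzero columns therefore have disjoint row supports and are automatically linearly independent, so $R_r=\mathrm{rank}(\mathbf{P})$ equals the cardinality of the integer set
\begin{align*}
V\;=\;\{\,Nl_1+Ml_2 : 0\le l_1\le L_s,\;0\le l_2\le L_t\,\}.
\end{align*}
After this reduction the two cases of the theorem amount to evaluating $|V|$.

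\textbf{Step 2 (Generic case $M\le L_s$ and $N\le L_t$).}
I would partition $V$ by residue modulo $N$. Because $\gcd(M,N)=1$, the map $l_2\mapsto Ml_2\bmod N$ is a bijection on $\mathbb{Z}/N\mathbb{Z}$, so each residue class $r\in\{0,\ldots,N-1\}$ is generated by a single congruence class of $l_2$ modulo $N$; the assumption $N\le L_t$ guarantees that each such class meets $[0,L_t]$ in some number $k_r\ge 1$ of points, with $\sum_{r=0}^{N-1}k_r=L_t+1$. For any fixed $l_2$, the strip $\{Nl_1+Ml_2:l_1=0,\ldots,L_s\}$ is an arithmetic progression of length $L_s+1$ and common difference $N$. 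Consecutive strips inside a fixed residue class are offset by $MN$ in starting value, and because $M\le L_s$ the second strip's minimum still falls inside (or at the endpoint of) the first, so their union is again a single AP. Inducting on $k_r$, the union of all strips in class $r$ is an AP of length $(L_s+1)+(k_r-1)M$, and summation over $r$ gives
\begin{align*}
|V|\;=\;\sum_{r=0}^{N-1}\bigl[(L_s+1)+(k_r-1)M\bigr]\;=\;N(L_s+1)+M(L_t+1)-MN.
\end{align*}

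\textbf{Step 3 (Degenerate case $M\ge L_s+1$ or $N\ge L_t+1$).}
Assume without loss of generality that $M\ge L_s+1$. If $(l_1,l_2)\neq(l_1',l_2')$ give the same value of $Nl_1+Ml_2$, then $N(l_1-l_1')=M(l_2'-l_2)$, and coprimality forces $M\mid(l_1-l_1')$; however $|l_1-l_1'|\le L_s<M$, so $l_1=l_1'$ and hence $l_2=l_2'$, a contradiction. Therefore no collisions occur and $|V|=(L_s+1)(L_t+1)$. The case $N\ge L_t+1$ is symmetric, working modulo $M$ instead.

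\textbf{Anticipated obstacle.}
The substantive step is the overlap bookkeeping in Step 2. I expect the main care to go into the boundary $M=L_s$, where consecutive strips only touch at a single endpoint but must still overlap to keep the union an AP, and into verifying that every residue class modulo $N$ is actually populated as soon as $N\le L_t$. Once those checks are made, the telescoping sum collapses directly to the claimed closed form, and Step 3 is essentially a one-line coprimality argument.
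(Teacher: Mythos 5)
Your proof is correct. Step 1 performs the same reduction as the paper (rank of $\mathbf{P}$ equals the number of distinct values of $Nl_1+Ml_2$ over the index box, i.e., the number of distinct entries of $\bm{\Omega}_s$ in (\ref{eq:3-D_20_5})), and your Step 3 is essentially identical to the paper's Lemma~\ref{lemma_1}: a one-line coprimality/divisibility argument showing no collisions when $M\ge L_s+1$ or $N\ge L_t+1$. Where you genuinely diverge is the main case. The paper's Lemma~\ref{lemma_2} counts \emph{holes} in the range $[0,L_tM+L_sN]$ by invoking a cited property of co-prime sumsets on the base box $l\in[0,N-1]$, $p\in[0,M-1]$ (which has exactly $(M-1)(N-1)$ holes) and then extending to the larger box via a fixed-left-edge/symmetry argument about where the holes can sit. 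Your Step 2 instead partitions the sumset by residue modulo $N$, observes that within each residue class the strips are arithmetic progressions of step $N$ whose consecutive starting points differ by $MN\le L_sN$ so that their union telescopes into a single progression of computable length, and sums over the $N$ classes. The two routes give the same formula, but yours is self-contained and constructive (it identifies exactly which integers are attained, not merely how many are missing), whereas the paper's is shorter at the cost of leaning on an external co-prime-array result and a somewhat informal symmetry claim about hole positions. Your flagged boundary case $M=L_s$ is handled correctly (the strips touch at a single endpoint, and the union is still one progression), and the condition $N\le L_t$ does guarantee every residue class modulo $N$ is populated, so no gap remains.
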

\begin{proof}
First, consider the scenario when $\beta$ is a positive integer, namely $N=1$ and $\beta = M$. This case is similar to FDA-MIMO radar, where $\frac{2d}{\lambda_b}=1$ and the spacing ratio between transmit and receive array is assumed to be an integer \cite{KWang2022}. It can be readily shown that 
\begin{align}               \label{eq:3-D_20_3}
\mathrm{rank}(\mathbf{P})&= 
\begin{cases}
L_s+1+\beta L_t, &  \beta < L_s+1, \\
(L_s+1)(L_t+1),& \beta \geq L_s+1,
\end{cases}            \notag \\
&= \min\{L_s+1+M L_t, (L_s+1)(L_t+1)\}.
\end{align}

Next, consider when $\beta$ is not an integer. We first write the non-zero column positions of the $l$-th submatrix from the $[l(L_s+1)+1]$-th row to the $(l+1)(L_s+1)$-th row of $\mathbf{P}$ as 
\begin{align}
\bm{\zeta}^l&= \frac{2d}{\lambda_b}[\zeta^l_0,\zeta^l_0+1,\ldots,\zeta^l_0+ L_s]^T       \notag \\
&= \frac{2d}{\lambda_b}\left[\frac{lM}{N},\frac{lM+N}{N},\ldots,\frac{lM+L_sN}{N} \right]^T   \in \mathbb{C}^{(L_s+1)\times 1},
\end{align}
where $0\leq l \leq L_t$. Then, all non-zero column positions of $\mathbf{P}$ are
\begin{align}
 \bm{\Omega}_s = \frac{2d}{\lambda_b} [\bm{\zeta}^0,\bm{\zeta}^1,\ldots,\bm{\zeta}^{L_t}] \in  \mathbb{C}^{(L_s+1)\times (L_t+1)}.  
\end{align}
Obviously, the entries of $\bm{\Omega}_s$ may be redundant depending on $M$ and $N$. Our aim is to find out the number of distinct entries in $\bm{\Omega}_s$. When $\frac{2d}{\lambda_b}=N$, $\bm{\Omega}_s$ is
\begin{align}                \label{eq:3-D_20_5}
\bm{\Omega}_s    
= \left[
\begin{matrix}
0      & N        & \cdots &  L_s N       \\
M_c    & M+N    & \cdots &  M+L_s N   \\
\vdots & \vdots     & \cdots &  \vdots        \\
L_tM & L_tM+N & \cdots &  L_tM+L_sN
\end{matrix}
\right].
\end{align}
Following Lemma \ref{lemma_1} (see Appendix \ref{app:Intro_lemma1}), when $N \geq L_t+1$ or $M \geq L_s+1$, we have 
\begin{align}             \label{eq:3-D_20_7}
    \mathrm{rank}(\mathbf{P})=(L_s+1)(L_t+1).
\end{align}

Next, consider the case when both $N < L_t+1$ and $M < L_s+1$ are satisfied. From (\ref{eq:3-D_20_5}), the entries in $\bm{\Omega}_s$ range from $0$ to $L_tM+L_sN$ while there are holes in the range. Following Lemma \ref{lemma_2} (see Appendix \ref{app:Intro_lemma2}), which gives the number of holes under such circumstance, when $N < L_t+1$ and $M < L_s+1$, we have
\begin{align}                    \label{eq:3-D_20_9}
\mathrm{rank}(\mathbf{P})&= L_tM+L_sN+1-(M-1)(N-1)   \notag \\
 &= N(L_s+1)+M(L_t+1)-M N .
\end{align}
Then, invoking (\ref{eq:3-D_20_3}), (\ref{eq:3-D_20_7}) and (\ref{eq:3-D_20_9}) completes the proof. 
\end{proof}

Denote $\mathbf{P}(\bar{f}_p) \triangleq \breve{\mathbf{a}}_{T,0}(\bar{f}_p) \otimes \mathbf{P}$. Then, $\widetilde{\mathbf{R}}_c $ in (\ref{eq:3-D_9_11}) becomes
\begin{align}
\widetilde{\mathbf{R}}_c  &= \sum_{p=1}^{N_p} \sum_{q=1}^{N_c} \sigma_{p,q}^2 \breve{\mathbf{v}}_{0,0,0}(\bar{f},\bar{f}_p) \breve{\mathbf{v}}_{0,0,0}^H(\bar{f},\bar{f}_p)     \notag \\
&= \sum_{p=1}^{N_p} \sum_{q=1}^{N_c} \sigma_{p,q}^2 \mathbf{P}(\bar{f}_p)  \mathbf{v}_{Rb}(\bar{f}) \mathbf{v}_{Rb}^H(\bar{f}) \mathbf{P}^H(\bar{f}_p)                 \notag \\
&= [\mathbf{P}(\bar{f}_1),\mathbf{P}(\bar{f}_2),\ldots,\mathbf{P}(\bar{f}_{N_p})]  \mathbf{R}_b  [\mathbf{P}(\bar{f}_1),\mathbf{P}(\bar{f}_2),\ldots, \notag \\
&~~~~  \mathbf{P}(\bar{f}_{N_p})]^H
\end{align}
where the 
block diagonal matrix 
\begin{align}               \label{eq:3-D_17}
\mathbf{R}_b &= \mathrm{diag}\left(\left[\sum_{q=1}^{N_c} \sigma_{1,q}^2 \mathbf{v}_{Rb}(\bar{f}) \mathbf{v}_{Rb}^H(\bar{f}),\ldots, \right. \right.           \notag \\
& ~~~ \left.\left. \sum_{q=1}^{N_c} \sigma_{N_p,q}^2 \mathbf{v}_{Rb}(\bar{f}) \mathbf{v}_{Rb}^H(\bar{f})\right]\right).
\end{align}\normalsize
From the definition of a semipositive definite matrix, it follows from (\ref{eq:3-D_17}) that $\mathbf{R}_b$ is a semipositive definite matrix. Thus, the rank of clutter covariance matrix $\widetilde{\mathbf{R}}_c $ depends on the rank of $[\mathbf{P}(\bar{f}_1),\mathbf{P}(\bar{f}_2),\ldots,\mathbf{P}(\bar{f}_{N_p})]$, namely
\begin{align}                 \label{eq:3-D_31}
\mathrm{rank}(\widetilde{\mathbf{R}}_c) = \mathrm{rank}([\mathbf{P}(\bar{f}_1),\mathbf{P}(\bar{f}_2),\ldots,\mathbf{P}(\bar{f}_{N_p})]).
\end{align}

Observe that when $N_p=1$, the 3-D clutter rank is $\mathrm{rank}(\mathbf{P}(\bar{f}_1))= R_r$. When $N_p=2$, the clutter rank increases by $R_r$, which is equal to $\mathrm{rank}([\mathbf{P}(\bar{f}_1),\mathbf{P}(\bar{f}_2)])= 2R_r$. Following similar computations, when $\frac{2d}{\lambda_b}=N$, the clutter rank for an arbitrary $N_p$ is
\begin{align}               \label{eq:3-D_18}
\mathrm{rank}(\widetilde{\mathbf{R}}_c)=
\begin{cases}
N_p R_r, &  1\leq N_p \leq L_s+1, \\
(L_s+1)R_r,& N_p > L_s+1.
\end{cases}
\end{align}

\subsection{Clutter covariance approximation using DPSS}               
The exact closed-form expression (\ref{eq:3-D_18}) for 3-D clutter rank in the coarray domain is for the specific condition $\frac{2d}{\lambda_b}=N$. For the case $\frac{2d}{\lambda_b} \neq N$, we propose a clutter rank approximation method based on DPSS to estimate the clutter subspace when $1\leq N_p \leq L_s+1$.

From (\ref{eq:3-D_9_11}), we have $\mathrm{span}(\widetilde{\mathbf{R}}_c) = \mathrm{span}(\mathbf{C})$, where
\begin{align}
\mathbf{C}= [\tilde{\mathbf{c}}_{1,1},\ldots,\tilde{\mathbf{c}}_{N_p,N_c}] \in \mathbb{C}^{(L_s+1)^2(L_t+1)\times (N_p N_c)}.
\end{align}
The $(m,k,n)$-th, $0\leq m \leq L_s$, $0 \leq k \leq L_t$, $0 \leq n \leq L_s$, element of $\tilde{\mathbf{c}}_{p,q}$ is
\begin{align}            \label{eq:3-D_36}
\tilde{\mathbf{c}}_{p,q}[m,k,n] &=  e^{-\mathrm{j}2\pi (\tilde{f}_T(p)m + \beta \bar{f}k+ \bar{f}n)}         \notag \\
&=  e^{-\mathrm{j}2\pi\tilde{f}_T(p)m} e^{-\mathrm{j}2\pi \bar{f}(n + \beta k)}                    \notag \\
&= e^{-\mathrm{j}2\pi\tilde{f}_T(p)m}  g(n + \beta k;\bar{f}),
\end{align}
where $g(n + \beta k;\bar{f}) \triangleq e^{-\mathrm{j}2\pi \bar{f}(n + \beta k)}$. Evidently, $g(n + \beta k;\bar{f})$ may be viewed as a nonuniformly
sampled version of the truncated sinusoidal function
\begin{align}
g(t;\bar{f})=
\begin{cases}
e^{-\mathrm{j}2\pi\bar{f}t}, &0 \leq  t\leq L_s+\beta L_t,  \\
0,  & \text{otherwise}.
\end{cases}
\end{align}

Furthermore, $\bar{f} = d/\lambda_b \cos (\psi_{p,q})$ leads to $-d/\lambda_b \leq \bar{f} \leq d/\lambda_b$. Denote $T_b \triangleq L_s+\beta L_t$. Therefore, the energy of the signal $g(t;\bar{f})$ is largely confined to a certain time-frequency region $[0,T_b]\times [-d/\lambda_b,d/\lambda_b]$. Assume $\sigma=d/\lambda_b$. Such signals can be well approximated by linear combinations of $\lceil 2\sigma T_b+1 \rceil=\lceil 2dT_b/\lambda_b+1 \rceil$ orthogonal functions \cite{HJLandau1962}, namely the approximate clutter rank in space-time coarray is $\lceil 2dT_b/\lambda_b+1 \rceil$. The basis functions correspond to the continuous PSWF $\bm{\phi}_i(t), i= 0,1,\ldots,\lceil 2dT_b/\lambda_b \rceil$, whose discrete version are $\phi^{M_T,\sigma}_i[n]$ in (\ref{eq:DPSS_1}). Following the exact rank estimation result in Theorem \ref{theo_P}, set 
\begin{align}         \label{eq:R_3-D_NT}
    N_T= \begin{cases}
 R_r-1, &  2d/\lambda_b=N, \\
\lceil 2dT_b/\lambda_b\rceil,& 2d/\lambda_b \neq N.
\end{cases}
\end{align}

Thus, we have
\begin{align}         \label{eq:R_3-D_42}
g(n + \beta k;\bar{f}) &\approx \sum_{i=0}^{N_T} \alpha_{p,q,i} \phi_i(n + \beta k)        \notag \\
& = \sum_{i=0}^{N_T} \alpha_{p,q,i} \phi^{M_T,\sigma}_i[Nn + Mk],
\end{align}
where $\alpha_{p,q,i}$ is the weight coefficient for linear combination and $M_T = NL_s+ML_t+1$. Stacking the elements $\tilde{\mathbf{c}}_{p,q}[m,k,n]$ in (\ref{eq:3-D_36}) for different $m$, $k$ and $n$ yields
\begin{align}
\tilde{\mathbf{c}}_{p,q} & \approx \breve{\mathbf{a}}_{T,0}(\bar{f}_p ) \otimes  \sum_{i=0}^{N_T} \alpha_{p,q,i} \bm{\phi}_i         \notag \\
&= \sum_{i=0}^{N_T} \alpha_{p,q,i} \breve{\mathbf{a}}_{T,0}(\bar{f}_p ) \otimes  \bm{\phi}_i,
\end{align}
where $\bm{\phi}_i  \in \mathbb{C}^{(L_s+1)(L_t+1)\times 1}$ is a vector that consists of the elements $\phi^{M_T,\sigma}_i[Nn + Mk]$, namely the Slepian basis vector. So, we have
\begin{subequations}
\begin{align}                   \label{eq:3-D_43}
\mathrm{span}(\widetilde{\mathbf{R}}_c) = \mathrm{span}(\mathbf{C}) = \mathrm{span}(\breve{\mathbf{A}}_{T,0} \otimes \mathbf{U}_c),
\end{align}
where
\begin{align}
\mathbf{U}_c= [\bm{\phi}_0,\bm{\phi}_1,\ldots,\bm{\phi}_{N_T}] \in  \mathbb{C}^{(L_s+1)(L_t+1)\times (N_T+1)},
\end{align}
and
\begin{align}
\breve{\mathbf{A}}_{T,0}=[\breve{\mathbf{a}}_{T,0}(\bar{f}_1 ),\breve{\mathbf{a}}_{T,0}(\bar{f}_2 ),\ldots,\breve{\mathbf{a}}_{T,0}(\bar{f}_{N_p} )]   \in  \mathbb{C}^{(L_s+1) \times N_p}.
\end{align}
\end{subequations}
To sum up, the clutter subspace is estimated using the geometry of the signal rather than the data of received signal. As a result, this procedure is independent of data.

Denote $r_b \triangleq N_p(N_T+1)$ and $\mathbf{V}_c \triangleq \breve{\mathbf{A}}_{T,0} \otimes \mathbf{U}_c$. From (\ref{eq:3-D_43}), there exists a $r_b \times r_b$ matrix $\mathbf{D}_c$ such that 
\begin{align} \label{eq:R_3-D_59}
  \widetilde{\mathbf{R}}_c \approx \mathbf{V}_c \mathbf{D}_c \mathbf{V}_c^H,  
\end{align}
which means that $\widetilde{\mathbf{R}}_c $ is a low-rank clutter covariance matrix. Applying the matrix inversion lemma to (\ref{eq:3-D_9_11}) yields
\begin{align}       \label{eq:R_3-D_59_1}
\widetilde{\mathbf{R}}_v^{-1} = \mathbf{R}_n^{-1} - \mathbf{R}_n^{-1}\mathbf{V}_c (\mathbf{D}_c^{-1}+ \mathbf{V}_c^H \mathbf{R}_n^{-1} \mathbf{V}_c)^{-1} \mathbf{V}_c^H \mathbf{R}_n^{-1}.
\end{align}
In the above, $\mathbf{D}_c$ is estimated from (\ref{eq:R_3-D_59}) as
\begin{align}          \label{eq:3-D_60}
\widehat{\mathbf{D}}_c &= \mathbf{V}_c^{\dag} \widehat{\mathbf{R}}_c (\mathbf{V}_c^{\dag})^H    \notag \\
&= (\mathbf{V}_c^H \mathbf{V}_c)^{-1}\mathbf{V}_c^H (\widehat{\mathbf{R}}_v - \widehat{\mathbf{R}}_n) \mathbf{V}_c(\mathbf{V}_c^H \mathbf{V}_c)^{-1},
\end{align}
where $ \mathbf{V}_c^{\dag} \triangleq  (\mathbf{V}_c^H \mathbf{V}_c)^{-1}\mathbf{V}_c^H$ is the pseudo inverse of $\mathbf{V}_c$ and $\widehat{\mathbf{R}}_c \triangleq \widehat{\mathbf{R}}_v - \widehat{\mathbf{R}}_n $ is the estimated version of clutter covariance matrix $\widetilde{\mathbf{R}}_c$. The noise covariance matrix $\mathbf{R}_n$ may be estimated in the absence of clutter echoes by setting the radar transmitter to be in passive mode so that the receiver collects only white noise. 

In (\ref{eq:3-D_60}), the major computational complexity stems from matrix inversion. From the orthogonality property of DPSS in (\ref{eq:DPSS_2}), we have $\mathbf{U}_c^H \mathbf{U}_c=\mathbf{I}$. Then, the computational burden for computing $(\mathbf{V}_c^H \mathbf{V}_c)^{-1}\mathbf{V}_c^H$ is reduced via the following equivalence
\begin{align}
(\mathbf{V}_c^H \mathbf{V}_c)^{-1}\mathbf{V}_c^H &= [(\breve{\mathbf{A}}_{T,0} \otimes \mathbf{U}_c)^H (\breve{\mathbf{A}}_{T,0} \otimes \mathbf{U}_c)]^{-1} \mathbf{V}_c^H  \notag \\
&=  [(\breve{\mathbf{A}}_{T,0}^H \breve{\mathbf{A}}_{T,0}) \otimes (\mathbf{U}_c^H \mathbf{U}_c)]^{-1} \mathbf{V}_c^H   \notag \\
&= [(\breve{\mathbf{A}}_{T,0}^H \breve{\mathbf{A}}_{T,0})^{-1} \otimes \mathbf{I}] (\breve{\mathbf{A}}_{T,0} \otimes \mathbf{U}_c)^H \notag \\
&= (\breve{\mathbf{A}}_{T,0}^H \breve{\mathbf{A}}_{T,0})^{-1} \breve{\mathbf{A}}_{T,0}^H \otimes \mathbf{U}_c^H.
\end{align}

\subsection{Further interference rejection with inaccurate prior knowledge}
As stated in Section \ref{sec:DPSS}.B, we reconstruct the 3-D clutter covariance using DPSS without any interference. Then, the output SINR is
\begin{align}               \label{eq:SINR_pre}
    \mathrm{SINR}= \frac{\sigma_t^2 \vert \mathbf{w}_v^H \breve{\mathbf{v}}_t(p_0,\psi_0,\nu_0) \vert^2}{\mathbf{w}_v^H \widetilde{\mathbf{R}}_v \mathbf{w}_v^H },
\end{align}
where $\sigma_t^2$ is the power of the target. In some applications, unexpected clustered interference such as a barrage jamming signal is overlaid with the clutter signal leading to deterioration of SINR. Then, the approximate region of the interference is estimated \textit{a priori} and all possible contributions belonging to the corresponding region are removed from the original clutter covariance in $\widetilde{\mathbf{R}}_c$. 

From (\ref{eq:3-D_36}), rewrite the general form of the clutter vector as 
\begin{align}
    \tilde{\mathbf{c}}(\mathbf{f})= \breve{\mathbf{a}}_T(\tilde{f}_T) \otimes \breve{\mathbf{b}}(f_d)\otimes  \breve{\mathbf{a}}_R(f_R),
\end{align}
where $\mathbf{f}=[\tilde{f}_T,f_d,f_R]^T$ is a $3\times 1$ vector. Let $\mathbf{f}_0=[\tilde{f}_{T0},f_{d0},f_{R0}]^T$ be a given sample in the space-time-range 3-D region and $\mathcal{R}(\mathbf{f}_0,\bm{\Delta})= \{ \mathbf{f}\vert ~ \vert \tilde{f}_T - \tilde{f}_{T0} \vert <  \Delta_{f_T}/2, \vert f_d - f_{d0} \vert <  \Delta_{f_d}/2, f_R - f_{R0} \vert <  \Delta_{f_R}/2 \}$ be a given known region in which the interference should be rejected. Denote $\bm{\Delta }\triangleq [\Delta_{f_T},\Delta_{f_d},\Delta_{f_R}]^T$, where $ \Delta_{f_T}$ is the width of rejection interval along the transmitted frequency dimension centered on $\tilde{f}_{T0}$, $ \Delta_{f_d}$ is the width of rejection interval along the Doppler frequency dimension centered on $f_{d0}$, and $\Delta_{f_R}$ is the width of rejection interval along the received frequency dimension centered on $f_{R0}$. In order to perform the rejection in the continuous region $\mathcal{R}(\mathbf{f}_0,\bm{\Delta})$, we first need to approximate the region by a subspace and then eliminate the impact of the region using an orthogonal projection. 

Define the averaged projection residue over the region $\mathcal{R}(\mathbf{f}_0,\bm{\Delta})$ as 
\begin{align}              
\label{eq:Prior_3}
    \varepsilon (\mathbf{f}_0, \bm{\Delta}, \bm{\Pi}) &= \mathrm{E}_{\mathbf{f}} \left[ \left\Vert \bm{\Pi}^{\bot} \tilde{\mathbf{c}}(\mathbf{f})  \right\Vert^2   \right]          \notag  \\
    &=  \int\limits_{\mathcal{R}(\mathbf{f}_0,\bm{\Delta})}         \left\Vert \bm{\Pi}^{\bot} \tilde{\mathbf{c}}(\mathbf{f})  \right\Vert^2  p(f_T) p(f_d) p(f_R) \mathrm{d} \mathbf{f},
\end{align}
where $p(f_T)$, $p(f_d)$ and $p(f_R)$ are the probability density functions of the variables $f_T$, $f_d$ and $f_R$, respectively; $\bm{\Pi}$ is the orthogonal projector; and $\bm{\Pi}^{\bot} = \mathbf{I}- \bm{\Pi}$ is the orthogonal complement of $\bm{\Pi}$. The criterion of choosing the projector $\bm{\Pi}$ is to minimize the averaged projection residue. As a natural choice, we take the uniform distribution on $\mathcal{R}(\mathbf{f}_0,\bm{\Delta})$ into consideration. Therefore, (\ref{eq:Prior_3}) becomes
\begin{align}
    \varepsilon (\mathbf{f}_0, \bm{\Delta}, \bm{\Pi}) = \frac{1}{\Delta_{f_T} \Delta_{f_d} \Delta_{f_R}}   \int\limits_{\mathcal{R}(\mathbf{f}_0,\bm{\Delta})}   \left\Vert \bm{\Pi}^{\bot} \tilde{\mathbf{c}}(\mathbf{f})  \right\Vert^2 \mathrm{d} \mathbf{f}.
\end{align}

Recall the following result from \cite{JBosse2018} to construct the desired projector.
\begin{proposition}
\cite{JBosse2018} Let $\bm{\Pi}_{\widetilde{K}}$ denote an orthogonal projector of rank $\widetilde{K}$. Then, if we denote $\mathbf{v}_k$ as the $k$th eigenvector with corresponding eigenvalue $\lambda_k$ (ordered in decreasing order) of the matrix 
\begin{subequations}
\begin{align}            \label{eq:Prior_4}          
    \mathbf{R}_J = \mathrm{E}_{\mathbf{f}} \left[ \tilde{\mathbf{c}}(\mathbf{f}) \tilde{\mathbf{c}}^H (\mathbf{f})  \right],
\end{align}
we have 
\begin{align}
    \mathrm{arg}~\mathrm{min}_{\bm{\Pi}_{\widetilde{K}}} ~ \varepsilon (\mathbf{f}_0, \bm{\Delta}, \bm{\Pi}) = \sum_{k=1}^{\widetilde{K}} \mathbf{v}_k \mathbf{v}^H_k,    \label{eq:Prior_5} \\   
    \mathrm{min}_{\bm{\Pi}_{\widetilde{K}}} ~ \varepsilon (\mathbf{f}_0, \bm{\Delta}, \bm{\Pi}) = \sum_{k=\widetilde{K}+1}^{(L_s+1)^2(L_t+1)}  \lambda_k.
\end{align}
\end{subequations}
\end{proposition}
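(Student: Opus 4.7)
The plan is to recognize this as a classical best rank-$\widetilde{K}$ subspace approximation problem (the Eckart--Young / Ky~Fan story recast in projector form) and carry it out in three short steps.

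First, I would convert the pointwise squared residue into a trace. Because $\bm{\Pi}^{\bot} = \mathbf{I} - \bm{\Pi}$ is itself an orthogonal projector (Hermitian and idempotent), $\lVert \bm{\Pi}^{\bot} \tilde{\mathbf{c}}(\mathbf{f}) \rVert^2 = \tilde{\mathbf{c}}^H(\mathbf{f}) \bm{\Pi}^{\bot} \tilde{\mathbf{c}}(\mathbf{f}) = \mathrm{tr}\bigl(\bm{\Pi}^{\bot} \tilde{\mathbf{c}}(\mathbf{f}) \tilde{\mathbf{c}}^H(\mathbf{f})\bigr)$. Taking the expectation in (\ref{eq:Prior_3}), pulling $\mathrm{tr}(\cdot)$ and $\bm{\Pi}^{\bot}$ outside the integral by linearity, and invoking the definition (\ref{eq:Prior_4}) of $\mathbf{R}_J$, I arrive at $\varepsilon(\mathbf{f}_0,\bm{\Delta},\bm{\Pi}) = \mathrm{tr}(\bm{\Pi}^{\bot} \mathbf{R}_J) = \mathrm{tr}(\mathbf{R}_J) - \mathrm{tr}(\bm{\Pi} \mathbf{R}_J)$. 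Since $\mathrm{tr}(\mathbf{R}_J)$ is constant, minimizing $\varepsilon$ over rank-$\widetilde{K}$ orthogonal projectors is equivalent to maximizing $\mathrm{tr}(\bm{\Pi} \mathbf{R}_J)$.

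Second, I would solve the maximization. Write any rank-$\widetilde{K}$ orthogonal projector as $\bm{\Pi} = \mathbf{U}\mathbf{U}^H$ with $\mathbf{U} \in \mathbb{C}^{(L_s+1)^2(L_t+1) \times \widetilde{K}}$ having orthonormal columns, and use the spectral decomposition $\mathbf{R}_J = \sum_{k=1}^{(L_s+1)^2(L_t+1)} \lambda_k \mathbf{v}_k \mathbf{v}_k^H$ with $\lambda_1 \ge \lambda_2 \ge \cdots \ge 0$ (note $\mathbf{R}_J$ is Hermitian positive semidefinite as the expectation of outer products). Then $\mathrm{tr}(\bm{\Pi}\mathbf{R}_J) = \sum_k \lambda_k \lVert \mathbf{U}^H \mathbf{v}_k \rVert^2$, where the nonnegative weights $w_k \triangleq \lVert \mathbf{U}^H \mathbf{v}_k \rVert^2$ satisfy $0 \le w_k \le 1$ and $\sum_k w_k = \mathrm{tr}(\mathbf{U}^H \mathbf{U}) = \widetilde{K}$. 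Since the $\lambda_k$ are non-increasing, this constrained linear program is maximized by concentrating all the mass on the first $\widetilde{K}$ coordinates, i.e.\ $w_1 = \cdots = w_{\widetilde{K}} = 1$ and $w_k = 0$ otherwise, attained by $\mathbf{U} = [\mathbf{v}_1,\ldots,\mathbf{v}_{\widetilde{K}}]$. This yields (\ref{eq:Prior_5}) with maximum value $\sum_{k=1}^{\widetilde{K}} \lambda_k$; equivalently, this is the Ky~Fan trace maximum principle.

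Third, plugging the optimum back gives $\min_{\bm{\Pi}_{\widetilde{K}}} \varepsilon = \mathrm{tr}(\mathbf{R}_J) - \sum_{k=1}^{\widetilde{K}} \lambda_k = \sum_{k=\widetilde{K}+1}^{(L_s+1)^2(L_t+1)} \lambda_k$, establishing the second identity. The only genuinely nontrivial step is justifying the Ky~Fan maximization, which I expect to be the main obstacle if one wants a self-contained argument; invoking Ky~Fan directly (or the classical Courant--Fischer / interlacing machinery) shortens the proof to essentially the trace manipulation in the first step, which is presumably the route taken in \cite{JBosse2018}.
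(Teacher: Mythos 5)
Your argument is correct and complete: rewriting the averaged residue as $\mathrm{tr}(\bm{\Pi}^{\bot}\mathbf{R}_J)$ and then maximizing $\mathrm{tr}(\bm{\Pi}\mathbf{R}_J)$ over rank-$\widetilde{K}$ orthogonal projectors via the weights $w_k=\lVert\mathbf{U}^H\mathbf{v}_k\rVert^2$ with $0\le w_k\le 1$ and $\sum_k w_k=\widetilde{K}$ is exactly the standard Ky~Fan / best-subspace argument. Note that the paper itself offers no proof of this proposition --- it is simply recalled from \cite{JBosse2018} --- so there is no in-paper derivation to compare against; your self-contained version fills that gap faithfully. The only cosmetic caveat is that the minimizer written in (\ref{eq:Prior_5}) is unique only when $\lambda_{\widetilde{K}}>\lambda_{\widetilde{K}+1}$; your proof correctly establishes it as \emph{an} optimizer, which is all the statement really needs.
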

The criterion of choosing the total dimension $\widetilde{K}=\widetilde{K}_1 \widetilde{K}_2\widetilde{K}_3$ is also discussed in \cite{JBosse2018}, where $\widetilde{K}_1$, $\widetilde{K}_2$ and $\widetilde{K}_3$ are the dimensions along the transmit spatial, Doppler, and receive spatial domains, respectively. 

We now construct (\ref{eq:Prior_5}) from a computation-savvy perspective in our 3-D Kronecker case. Since $\mathbf{f}$ in $\mathcal{R}(\mathbf{f}_0,\bm{\Delta})$ obeys the uniform distribution, for the sake of simplicity, we choose $\Delta_{f_T}= \Delta_{f_R} =\frac{1}{L_s+1}$ and $\Delta_{f_d}=\frac{1}{L_t+1}$. Then, (\ref{eq:Prior_4}) is rewritten as (\ref{eq:Prior_7}), 
\begin{figure*}[!t]            
\par\noindent
\begin{align}    \label{eq:Prior_7}
    \mathbf{R}_J  &= \frac{1}{\Delta_{f_T} \Delta_{f_d} \Delta_{f_R}} \int\limits_{f_{T0}-\frac{\Delta_{f_T}}{2}}^{f_{T0}+\frac{\Delta_{f_T}}{2}}  \int\limits_{f_{d0}-\frac{\Delta_{f_d}}{2}}^{f_{d0}+\frac{\Delta_{f_d}}{2}}  \int\limits_{f_{R0}-\frac{\Delta_{f_R}}{2}}^{f_{R0}+\frac{\Delta_{f_R}}{2}}  
    [\breve{\mathbf{a}}_T(\tilde{f}_T) \otimes \breve{\mathbf{b}}(f_d)\otimes  \breve{\mathbf{a}}_R(f_R)] [\breve{\mathbf{a}}_T(\tilde{f}_T) \otimes \breve{\mathbf{b}}(f_d)\otimes  \breve{\mathbf{a}}_R(f_R)]^H   \mathrm{d}f_T \mathrm{d}f_d \mathrm{d}f_R             \notag \\
     &= \frac{1}{\Delta_{f_T} \Delta_{f_d} \Delta_{f_R}} \int\limits_{f_{T0}-\frac{\Delta_{f_T}}{2}}^{f_{T0}+\frac{\Delta_{f_T}}{2}}  \int\limits_{f_{d0}-\frac{\Delta_{f_d}}{2}}^{f_{d0}+\frac{\Delta_{f_d}}{2}}  \int\limits_{f_{R0}-\frac{\Delta_{f_R}}{2}}^{f_{R0}+\frac{\Delta_{f_R}}{2}}   
    [\breve{\mathbf{a}}_T(\tilde{f}_T)\breve{\mathbf{a}}_T^H(\tilde{f}_T)] \otimes [\breve{\mathbf{b}}(f_d)\breve{\mathbf{b}}^H(f_d) ]\otimes  [\breve{\mathbf{a}}_R(f_R)\breve{\mathbf{a}}_R^H(f_R)]   \mathrm{d}f_T \mathrm{d}f_d \mathrm{d}f_R             \notag \\
    &= \frac{1}{\Delta_{f_T} \Delta_{f_d} \Delta_{f_R}} \int\limits_{f_{T0}-\frac{\Delta_{f_T}}{2}}^{f_{T0}+\frac{\Delta_{f_T}}{2}}\breve{\mathbf{a}}_T(\tilde{f}_T)\breve{\mathbf{a}}_T^H(\tilde{f}_T) \mathrm{d}f_T           
    \otimes \int\limits_{f_{d0}-\frac{\Delta_{f_d}}{2}}^{f_{d0}+\frac{\Delta_{f_d}}{2}} \breve{\mathbf{b}}(f_d)\breve{\mathbf{b}}^H(f_d) \mathrm{d}f_d  \otimes  \int\limits_{f_{R0}-\frac{\Delta_{f_R}}{2}}^{f_{R0}+\frac{\Delta_{f_R}}{2}}  \breve{\mathbf{a}}_R(f_R)\breve{\mathbf{a}}_R^H(f_R) \mathrm{d}f_R   \notag \\
    &= \frac{1}{\Delta_{f_T}} \left\{ [\breve{\mathbf{a}}_T(\tilde{f}_{T0})\breve{\mathbf{a}}_T^H(\tilde{f}_{T0})] \diamond \mathbf{B}_{L_s+1,\Delta_{f_T}/2} \right\}        
    \otimes \frac{1}{\Delta_{f_d}} \left\{ [\breve{\mathbf{b}}(f_{d0})\breve{\mathbf{b}}^H(f_{d0})] \diamond \mathbf{B}_{L_t+1,\Delta_{f_d}/2} \right\}        \notag \\ 
    &~~~  \otimes  \frac{1}{\Delta_{f_R}} \left\{  [\breve{\mathbf{a}}_R(f_{R0})\breve{\mathbf{a}}_R^H(f_{R0})] \diamond \mathbf{B}_{L_s+1,\Delta_{f_R}/2} \right\}   \notag \\
    &= \frac{\mathbf{M}_T }{\Delta_{f_T}} \otimes  \frac{\mathbf{M}_d}{\Delta_{f_d}}  \otimes  \frac{\mathbf{M}_R}{\Delta_{f_R}},
\end{align}\normalsize
\hrulefill
\end{figure*}
where the matrix 
\begin{subequations}
\begin{align}
    \mathbf{B}_{P,W} = 2W \mathrm{sinc}(2W(k-l)), ~~~ 1\leq k,l\leq P,
\end{align}
and
\begin{align}
    \mathbf{M}_T &= [\breve{\mathbf{a}}_T(\tilde{f}_{T0})\breve{\mathbf{a}}_T^H(\tilde{f}_{T0})] \diamond \mathbf{B}_{L_s+1,\Delta_{f_T}/2} ,  \label{eq:Prior_9_1} \\
    \mathbf{M}_d &=  [\breve{\mathbf{b}}(f_{d0})\breve{\mathbf{b}}^H(f_{d0})] \diamond \mathbf{B}_{L_t+1,\Delta_{f_d}/2} ,  \label{eq:Prior_9_2}  \\
    \mathbf{M}_R &=  [\breve{\mathbf{a}}_R(f_{R0})\breve{\mathbf{a}}_R^H(f_{R0})] \diamond \mathbf{B}_{L_s+1,\Delta_{f_R}/2}.  \label{eq:Prior_9_3} 
\end{align}
\end{subequations}
Clearly, the matrices $\mathbf{M}_T$, $\mathbf{M}_d$ and $\mathbf{M}_R$ are ``modulated prolate matrices", i.e. prolate matrices modulated by $\tilde{f}_{T0}$, $f_{d0}$ and $f_{R0}$, respectively. For the prolate matrices $\mathbf{B}_{L_s+1,\Delta_{f_T}/2}$, $\mathbf{B}_{L_t+1,\Delta_{f_d}/2}$ and $ \mathbf{B}_{L_s+1,\Delta_{f_R}/2}$, singular value decomposition (SVD) yields, respectively,
\begin{subequations}
\begin{align}
   & \mathbf{B}_{L_s+1,\Delta_{f_T}/2}    \notag \\ 
   & ~~~~~ = \mathbf{U}(L_s+1,\Delta_{f_T}) \bm{\Lambda}(\lambda) \mathbf{U}^H(L_s+1,\Delta_{f_T}),    \label{eq:Prior_10}   \\
   & \mathbf{B}_{L_t+1,\Delta_{f_d}/2}    \notag \\
   & ~~~~~ = \mathbf{U}(L_t+1,\Delta_{f_d}) \bm{\Lambda}(\mu) \mathbf{U}^H(L_t+1,\Delta_{f_d}) ,       \label{eq:Prior_11}   
\end{align}
and
\begin{align}
   & \mathbf{B}_{L_s+1,\Delta_{f_R}/2}    \notag \\
   & ~~~~~ = \mathbf{U}(L_s+1,\Delta_{f_R}) \bm{\Lambda}(\kappa) \mathbf{U}^H(L_s+1,\Delta_{f_R}) ,     \label{eq:Prior_12}
\end{align}
\end{subequations}
where the $k_1$-th column of $\mathbf{U}(L_s+1,\Delta_{f_T})$ is the $k_1$-th DPSS characterized by timelimiting parameter $L_s+1$ and bandlimiting parameter $\Delta_{f_T}$; the $k_2$-th column of $\mathbf{U}(L_T+1,\Delta_{f_d})$ is the $k_2$-th DPSS characterized by $L_t+1$ and $\Delta_{f_d}$; the $k_3$-th column of $\mathbf{U}(L_s+1,\Delta_{f_R})$ is the $k_3$-th DPSS characterized by $L_s+1$ and $\Delta_{f_R}$; $\bm{\Lambda}(\lambda)$, $\bm{\Lambda}(\mu)$, and $\bm{\Lambda}(\kappa)$ are all diagonal matrices with the diagonal elements $[\lambda_1,\lambda_2,\ldots,\lambda_{L_s+1}]$, $[\mu_1,\mu_2,\ldots,\mu_{L_t+1}]$, and $[\kappa_1,\kappa_2,\ldots,\kappa_{L_s+1}]$, respectively. 

The singular vectors of $\mathbf{M}_T$, $\mathbf{M}_d$, and $\mathbf{M}_R$  are the ``modulated DPSS" vectors \cite{BPDay2020}, where the singular vectors of $\mathbf{M}_T$ are $\{\bm{\alpha}_{k_1}=\breve{\mathbf{a}}_T(\tilde{f}_{T0}) \diamond \mathbf{u}_{k_1}(L_s+1,\Delta_{f_T})\}_{k_1=1}^{L_s+1} $; the singular vectors of $\mathbf{M}_d$ are $\bm{\beta}_{k_2}= \{\breve{\mathbf{b}}(\tilde{f}_{d0}) \diamond \mathbf{u}_{k_2}(L_t+1,\Delta_{f_d})\}_{k_2=1}^{L_t+1}$; and the singular vectors of $\mathbf{M}_R$ are $ \{\bm{\gamma}_{k_3} =\breve{\mathbf{a}}_R(\tilde{f}_{R0}) \diamond \mathbf{u}_{k_3}(L_s+1,\Delta_{f_R})\}_{k_3=1}^{L_s+1}$. The following Theorem~\ref{Theorem:eigen} provides eigenvectors for a 3-D Kronecker structure.
\begin{theorem}                \label{Theorem:eigen}
Assume the eigenvectors of modulated prolate matrices $\mathbf{M}_T \in \mathbb{C}^{L_s+1}$, $\mathbf{M}_d \in \mathbb{C}^{L_t+1}$, and $\mathbf{M}_R \in \mathbb{C}^{L_s+1}$ are $\{\bm{\alpha}_{k_1} \}_{k_1=1}^{L_s+1}$, $\{\bm{\beta}_{k_2} \}_{k_2=1}^{L_t+1}$, and $\{\bm{\gamma}_{k_1} \}_{k_1=1}^{L_s+1}$ and the corresponding eigenvalues are $\{\lambda_{k_1} \}_{k_1=1}^{L_s+1}$, $\{\mu_{k_2} \}_{k_2=1}^{L_t+1}$, and $\{\kappa_{k_3} \}_{k_3=1}^{L_s+1}$, respectively. Then, the $k$-th, $1\leq k \leq (L_s+1)^2(L_t+1)$, eigenvector of $\mathbf{M}_T \otimes \mathbf{M}_d \otimes \mathbf{M}_R$ is
\begin{align}                      \label{eq:DPSS_3-D}
 \mathbf{v}_k= \bm{\alpha}_{k_1} \otimes \bm{\beta}_{k_2} \otimes \bm{\gamma}_{k_3}.   
\end{align}
\end{theorem}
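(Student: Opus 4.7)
The plan is to verify this by a direct application of the mixed-product property of Kronecker products, which states that $(\mathbf{A}\otimes \mathbf{B})(\mathbf{C}\otimes \mathbf{D}) = (\mathbf{A}\mathbf{C})\otimes (\mathbf{B}\mathbf{D})$ whenever the dimensions are compatible, and which extends by associativity to three factors as $(\mathbf{A}\otimes \mathbf{B}\otimes \mathbf{C})(\mathbf{x}\otimes \mathbf{y}\otimes \mathbf{z}) = (\mathbf{A}\mathbf{x})\otimes (\mathbf{B}\mathbf{y})\otimes (\mathbf{C}\mathbf{z})$. This is the central algebraic fact; the rest of the argument is accounting.

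First, I would apply this identity to $\mathbf{M}_T \otimes \mathbf{M}_d \otimes \mathbf{M}_R$ acting on the candidate vector $\mathbf{v}_k = \bm{\alpha}_{k_1}\otimes \bm{\beta}_{k_2}\otimes \bm{\gamma}_{k_3}$, obtaining $(\mathbf{M}_T\bm{\alpha}_{k_1})\otimes (\mathbf{M}_d\bm{\beta}_{k_2})\otimes (\mathbf{M}_R\bm{\gamma}_{k_3})$. Invoking the individual eigen-relations $\mathbf{M}_T\bm{\alpha}_{k_1}=\lambda_{k_1}\bm{\alpha}_{k_1}$, $\mathbf{M}_d\bm{\beta}_{k_2}=\mu_{k_2}\bm{\beta}_{k_2}$, and $\mathbf{M}_R\bm{\gamma}_{k_3}=\kappa_{k_3}\bm{\gamma}_{k_3}$, and using bilinearity of the Kronecker product, this simplifies to $\lambda_{k_1}\mu_{k_2}\kappa_{k_3}\,\mathbf{v}_k$. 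Thus $\mathbf{v}_k$ is an eigenvector of $\mathbf{M}_T\otimes \mathbf{M}_d\otimes \mathbf{M}_R$ with eigenvalue $\lambda_{k_1}\mu_{k_2}\kappa_{k_3}$.

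Next, I would account for completeness. Since $\mathbf{M}_T$, $\mathbf{M}_d$, and $\mathbf{M}_R$ are Hermitian (each is a Hadamard product of a rank-one outer product of a steering vector with a symmetric prolate matrix, and modulated prolate matrices inherit Hermitian structure), their eigenvector sets $\{\bm{\alpha}_{k_1}\}$, $\{\bm{\beta}_{k_2}\}$, and $\{\bm{\gamma}_{k_3}\}$ each form an orthonormal basis of $\mathbb{C}^{L_s+1}$, $\mathbb{C}^{L_t+1}$, and $\mathbb{C}^{L_s+1}$, respectively. The collection $\{\bm{\alpha}_{k_1}\otimes \bm{\beta}_{k_2}\otimes \bm{\gamma}_{k_3}\}$ therefore yields $(L_s+1)^2(L_t+1)$ mutually orthonormal vectors (orthonormality follows from the identity $\langle \mathbf{u}_1\otimes \mathbf{u}_2\otimes \mathbf{u}_3,\mathbf{v}_1\otimes \mathbf{v}_2\otimes \mathbf{v}_3\rangle = \langle \mathbf{u}_1,\mathbf{v}_1\rangle \langle \mathbf{u}_2,\mathbf{v}_2\rangle \langle \mathbf{u}_3,\mathbf{v}_3\rangle$), which exhausts the dimension of the Kronecker-product space. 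Finally, one fixes a bijection $k \leftrightarrow (k_1,k_2,k_3)$ (the ordering is immaterial to the statement, though in practice one would sort by the product $\lambda_{k_1}\mu_{k_2}\kappa_{k_3}$ to match the ``ordered in decreasing order'' convention used in the proposition).

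There is no real obstacle here; the only thing to be careful about is ensuring Hermitian-ness of the modulated prolate matrices so that the eigenvector sets are genuinely complete orthonormal bases, and to note explicitly that the triple tensor identity follows from the binary case by associativity of $\otimes$. The proof is essentially a one-line computation once the mixed-product property is invoked.
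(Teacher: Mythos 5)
Your proof is correct and rests on the same central fact as the paper's: the mixed-product property of the Kronecker product applied to the eigenstructure of $\mathbf{M}_T$, $\mathbf{M}_d$, $\mathbf{M}_R$. The paper carries this out at the level of the full eigendecompositions (writing $\mathbf{M}_T\otimes\mathbf{M}_d\otimes\mathbf{M}_R=(\mathbf{T}\otimes\mathbf{P}\otimes\mathbf{Q})(\bm{\Lambda}(\lambda)\otimes\bm{\Lambda}(\mu)\otimes\bm{\Lambda}(\kappa))(\mathbf{T}\otimes\mathbf{P}\otimes\mathbf{Q})^H$ and reading off the columns) whereas you act on one candidate eigenvector at a time and then add an explicit orthonormality/dimension count, but this is the same argument in substance, and your completeness step is if anything a point the paper leaves implicit.
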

\begin{proof}
Following (\ref{eq:Prior_9_1})$\sim$(\ref{eq:Prior_9_3}) and (\ref{eq:Prior_10})$\sim$(\ref{eq:Prior_12}), the modulated prolate matrices are 
\par\noindent
\begin{align}
    \mathbf{M}_T & =  \mathrm{diag}(\breve{\mathbf{a}}_T(\tilde{f}_{T0}))\mathbf{U}(L_s+1,\Delta_{f_T}) \bm{\Lambda}(\lambda)   \notag \\
       &~~~~  \times \mathbf{U}^H(L_s+1,\Delta_{f_T}) \mathrm{diag}(\breve{\mathbf{a}}_T(\tilde{f}_{T0}))^H       \notag  \\
        &= \mathbf{T} \bm{\Lambda}(\lambda)  \mathbf{T}^H,                               \\
    \mathbf{M}_d &= \mathrm{diag}(\breve{\mathbf{b}}(\tilde{f}_{d0}))   \mathbf{U}(L_t+1,\Delta_{f_d}) \bm{\Lambda}(\mu)   \notag \\      
        &~~~~ \times \mathbf{U}^H(L_t+1,\Delta_{f_d})\mathrm{diag}(\breve{\mathbf{b}}(\tilde{f}_{d0}))^H                 \notag     \\
        &= \mathbf{P} \bm{\Lambda}(\mu)  \mathbf{P}^H,                                \\
    \mathbf{M}_R &= \mathrm{diag}(\breve{\mathbf{a}}_R(\tilde{f}_{R0})) \mathbf{U}(L_s+1,\Delta_{f_R}) \bm{\Lambda}(\kappa)    \notag \\  
        & ~~~~  \times \mathbf{U}^H(L_s+1,\Delta_{f_R})     \mathrm{diag}(\breve{\mathbf{a}}_R(\tilde{f}_{R0}))^H        \notag \\
        &= \mathbf{Q} \bm{\Lambda}(\kappa)  \mathbf{Q}^H,    
\end{align} \normalsize
where $\mathbf{T}=  \mathrm{diag}(\breve{\mathbf{a}}_T(\tilde{f}_{T0}))  \mathbf{U}(L_s+1,\Delta_{f_T})= [\bm{\alpha}_1,\bm{\alpha}_2,\ldots,\bm{\alpha}_{L_s+1}]$, $\mathbf{P}= \mathrm{diag}(\breve{\mathbf{b}}(\tilde{f}_{d0})) \mathbf{U}(L_t+1,\Delta_{f_d})= [\bm{\beta}_1,\bm{\beta}_2,\ldots,\bm{\beta}_{L_t+1}]$ and $\mathbf{Q}= \mathrm{diag}(\breve{\mathbf{a}}_R(\tilde{f}_{R0})) \mathbf{U}(L_s+1,\Delta_{f_R})=[\bm{\gamma}_1,\bm{\gamma}_2,\ldots,\bm{\gamma}_{L_s+1}]$. Then, 
\begin{align}
    & \mathbf{M}_T \otimes \mathbf{M}_d \otimes \mathbf{M}_R         \notag \\
    &~~ = (\mathbf{T} \bm{\Lambda}(\lambda)  \mathbf{T}^H )\otimes (\mathbf{P} \bm{\Lambda}(\mu)  \mathbf{P}^H)  \otimes (\mathbf{Q} \bm{\Lambda}(\kappa)  \mathbf{Q}^H)                      \notag \\
    &~~ =(\mathbf{T}\otimes \mathbf{P} \otimes \mathbf{Q})( \bm{\Lambda}(\lambda) \otimes \bm{\Lambda}(\mu) \otimes \bm{\Lambda}(\kappa)) (\mathbf{T}\otimes \mathbf{P} \otimes \mathbf{Q})^H.
\end{align}
Clearly, the $k$-th column of $\mathbf{V} \triangleq \mathbf{T}\otimes \mathbf{P} \otimes \mathbf{Q}$ expressed as $ \mathbf{v}_k= \bm{\alpha}_{k_1} \otimes \bm{\beta}_{k_2} \otimes \bm{\gamma}_{k_3} $ is the eigenvector of $\mathbf{M}_T \otimes \mathbf{M}_d \otimes \mathbf{M}_R$. This concludes the proof.
\end{proof}

Therefore, from (\ref{eq:Prior_5}) and (\ref{eq:DPSS_3-D}), we obtain the orthogonal projector $\bm{\Pi}_{\widetilde{K}}$. Then, the jamming-free clutter covariance matrix is $\breve{\mathbf{R}}_v =\bm{\Pi}_{\widetilde{K}}^{\bot} \widetilde{\mathbf{R}}_v $, which counteracts the effect of the unexpected interference region. The output SINR after clutter subspace rejection becomes
\begin{align}           \label{eq:SINR_post}
    \mathrm{SINR}_{\mathrm{post}}= \frac{\sigma_t^2 \vert \mathbf{w}_v^H \breve{\mathbf{v}}_t(p_0,\psi_0,\nu_0) \vert^2}{\mathbf{w}_v^H \bm{\Pi}_{\widetilde{K}}^{\bot} \widetilde{\mathbf{R}}_v\mathbf{w}_v^H }.
\end{align}
Clearly, by comparing the pre- and post-rejection SINRs in (\ref{eq:SINR_pre}) and (\ref{eq:SINR_post}), respectively, it is evident that removing clustered interference enhances the output SINR. This improves the probability of target detection, which is a monotonically increasing function of SINR \cite{SSK1994}.

Algorithm~\ref{alg:DPSS_Scheme} summarizes our proposed DPSS-based clutter suppression method for co-pulsing FDA radar.
\begin{algorithm}
	\caption{DPSS-based co-pulsing FDA clutter suppression}
	\label{alg:DPSS_Scheme}
	\begin{algorithmic}[1]
		\Statex \textbf{Input:} Sample covariance matrix $\widetilde{\mathbf{R}}_v$ in the space-time-range coarray domain (\ref{eq:3-D_9_11})
		\Statex \textbf{Output:} Approximated $\widetilde{\mathbf{R}}_v^{-1}$ and CoSTAP filter weight $\mathbf{w}_v$
        
        \Statex
		\Statex $\triangleright$ \textbf{Clutter subspace construction (off-line)}
		\State Compute the exact clutter rank based on Theorem \ref{theo_P} as $R_r$ and the approximate rank in \cite{HJLandau1962} as $\lceil 2dT_b/\lambda_b+1 \rceil$.
		
		\State Calculate $N_T$ based on (\ref{eq:R_3-D_NT}). 
		
		\State Generate the Slepian basis vectors $\{\bm{\phi}_i \}_{i=0}^{N_T} \in \mathbb{C}^{(L_s+1)(L_t+1)\times 1}$.
        \Statex $\mathbf{U}_c \leftarrow [\bm{\phi}_0,\bm{\phi}_1,\ldots,\bm{\phi}_{N_T}]$, $\mathbf{V}_c \leftarrow \breve{\mathbf{A}}_{T,0} \otimes \mathbf{U}_c$.
        
        \Statex
		\Statex $\triangleright$ \textbf{Unexpected interference region rejection}
		\State Compute $\mathbf{v}_k$ from the modulated DPSS vectors $\bm{\alpha}_{k_1}$, $ \bm{\beta}_{k_2}$, $\bm{\gamma}_{k_3}$ based on Theorem \ref{Theorem:eigen}.
        \State Construct the orthogonal projector $\bm{\Pi}_{\widetilde{K}}$ from (\ref{eq:Prior_5}).
        \State $\breve{\mathbf{R}}_v \leftarrow \bm{\Pi}_{\widetilde{K}}^{\bot} \widetilde{\mathbf{R}}_v $.
  
		\Statex
		\Statex $\triangleright$ \textbf{CoSTAP filter}
		\State $\widehat{\mathbf{D}}_c \leftarrow (\mathbf{V}_c^H \mathbf{V}_c)^{-1}\mathbf{V}_c^H (\breve{\mathbf{R}}_v - \widehat{\mathbf{R}}_n) \mathbf{V}_c(\mathbf{V}_c^H \mathbf{V}_c)^{-1}$.
		
		\State $\widetilde{\mathbf{R}}_v^{-1} \leftarrow \widehat{\mathbf{R}}_n^{-1} - \widehat{\mathbf{R}}_n^{-1}\mathbf{V}_c (\widehat{\mathbf{D}}_c^{-1}+ \mathbf{V}_c^H \widehat{\mathbf{R}}_n^{-1} \mathbf{V}_c)^{-1} \mathbf{V}_c^H \widehat{\mathbf{R}}_n^{-1}$.

         \State Compute $\mathbf{w}_v$ based on (\ref{eq:3-D_10}).
	\end{algorithmic}
\end{algorithm}
\color{black}

\subsection{Computational complexity analysis}  
From (\ref{eq:3-D_10}), the computational complexity of our proposed CoSTAP method primarily depends on the complexity of computing $\widehat{\mathbf{R}}$ in (\ref{eq:R_PA3}), $\mathbf{R}_v$ in (\ref{eq:3-D_4}), and the inversion of $\widetilde{\mathbf{R}}_v$. The complexities of computing $\widehat{\mathbf{R}}$ and $\mathbf{R}_v$ are of the order $\mathcal{C}_1=\mathcal{O}(L(P_s^2K)^2)$ and $\mathcal{C}_2=\mathcal{O}((L_s+1)^6(L_t+1)^3)$, respectively. If we invert the covariance matrix $\widetilde{\mathbf{R}}_v$ directly, its computational complexity is $\mathcal{C}_{c1}=\mathcal{O}((L_s+1)^6(L_t+1)^3)$, leading to the overall complexity $\mathcal{C}_{1}+\mathcal{C}_{2}+\mathcal{C}_{c1}$. 

Leveraging upon the geometry structure of the received signal in coarray domain, our proposed DPSS-based clutter suppression method approximates the inversion of $\widetilde{\mathbf{R}}_v$ from a set of Slepian basis vectors generated offline. From (\ref{eq:R_3-D_59_1}), the complexity of computing $\widetilde{\mathbf{R}}_v^{-1}$ is $\mathcal{C}_{c2}=\mathcal{O}(2r_b^3+r_b(L_s+1)^4(L_t+1)^2)$, where $\mathcal{O}(2r_b^3)$ is the complexity of computing $\mathbf{D}_c^{-1}$ and $(\mathbf{D}_c^{-1}+ \mathbf{V}_c^H \mathbf{R}_n^{-1} \mathbf{V}_c)^{-1}$. Thus, the overall complexity of our proposed method is $\mathcal{C}_{1}+\mathcal{C}_{2}+\mathcal{C}_{c2}$. 

For a comparison, we consider the uniform counterpart which has the same number of physical sensors and pulses. From (\ref{eq:R_PA3}), the complexity of computing the inversion of $\widehat{\mathbf{R}}$ directly is $\mathcal{C}_{p}=\mathcal{O}((P_s^2K)^3)$, resulting in the overall complexity of $\mathcal{C}_1+\mathcal{C}_{p}$. Our proposed CoSTAP method requires more computational operations than the classical uniform FDA-STAP because the former operates on the virtual signal in coarray domain. However, CoSTAP provides a better output SINR performance. Moreover, the DPSS-based clutter suppression method allows striking a balance between the performance and computational complexity. We demonstrate this via numerical experiments in the next section.

\color{black}

\section{Numerical Experiments}                     
\label{sec:NE}
We performed various simulation experiments to validate the theoretical derivation and demonstrate the effectiveness of our proposed method. For the airborne FDA radar, the co-prime pattern $\mathcal{S}_s$ of co-prime arrays and co-prime FOs was characterized by $M_s=2$ and $N_s=3$. Thus, the total number of array sensor elements is $P_s= N_s+2M_s-1=6$. For co-prime PRI, the co-prime co-prime pattern $\mathcal{S}_t$ was characterized by $M_t=2$ and $N_t=3$. Hence, a total of $K=N_t+2M_t-1=6$ pulses were transmitted during the CPI with the fundamental PRI and pulse duration of $T=0.5$ ms and $T_p=1$ $\mu$s, respectively. The reference carrier frequency was set as $f_b=1$ GHz. For the setting of frequency increment $\Delta f$, to separate the range ambiguous clutter in the transmit and receive spatial domain effectively, we adopted the criterion in \cite{KWang2022}, namely $\Delta f \equiv 1/N_p \mod 1/T$, where $N_p$ is the number of clutter ambiguities. The clutter-to-noise ratio was set to 40 dB. All clutter scatterers followed the assumption on cosine values of conic angles as stated in (\ref{eq:SM_R3}).


\noindent\textbf{Clutter power spectrum analysis}: Assume that the height and velocity of the platform are $H = 6$ km and $ v_{\mathrm{p}}= 150$ m/s, respectively. When $d=\lambda_b/2$, we have $\beta = \frac{2 v_{\mathrm{p}}T}{d}=1$. The clutter power spectrum is drawn with the MVDR algorithm as
\begin{align}
& P_{MVDR}(f_T,f_d,f_R)  \notag \\
&~~~~~ =\frac{1}{\breve{\mathbf{v}}_{0,0,0}^H(f_T,f_d,f_R)\widetilde{\mathbf{R}}_c^{-1} \breve{\mathbf{v}}_{0,0,0}(f_T,f_d,f_R)},
\end{align}
where $\breve{\mathbf{v}}_{0,0,0}(f_T,f_d,f_R)=\breve{\mathbf{a}}_{T,0}(f_T) \otimes \breve{\mathbf{b}}_{0}(f_d)\otimes  \breve{\mathbf{a}}_{R,0}(f_R)$ is the 3-D steering vector and $(f_T,f_d,f_R)$ is the 3-D grid satisfying $f_T \in [0,1]$, $f_d \in [-0.5,0.5]$ and $f_R \in [-0.5,0.5]$. We first considered $N_p=3$.  Fig. \ref{Clutter_spectra_Np3} shows the clutter spectra of the traditional airborne FDA and our proposed co-pulsing airborne FDA. In our proposed method, we consider two covariance matrix estimation methods in the coarray domain: one based on Theorem \ref{Theo_SSM}, where $\widetilde{\mathbf{R}}_v$ is estimated using finite secondary training samples, and the other based on the co-pulsing DPSS approximation method. It can be observed that all three methods successfully separate the clutter spectra from different ambiguous range regions in the transmit-receive domain. Moreover, our proposed co-pulsing DPSS-based method effectively approximates the clutter subspace.

However, when the number of ambiguities increases to $N_p=6$, as shown in Fig. \ref{Clutter_spectra_Np6}, the traditional airborne FDA fails to separate the clutter spectra from different ambiguous range regions in the transmit-receive domain, whereas our proposed co-pulsing airborne FDA continues to work effectively. This is because our methods focus on the virtual domain processing provided by the C-Cube architecture, whose DoFs are not restricted by the number of physical FOs. Specifically, as seen in (\ref{eq:R_3-D_19}), the maximum resolvable number of range ambiguities for our proposed co-pulsing FDA is $L_s=7$, determined by the number of FOs in the coarray domain. In contrast, the range ambiguity resolvability of the classical uniform FDA is $P_s-1=5$, limited by the number of FOs in the physical domain.
\begin{figure*}[htp]
\centerline{\includegraphics[width=0.95\textwidth]{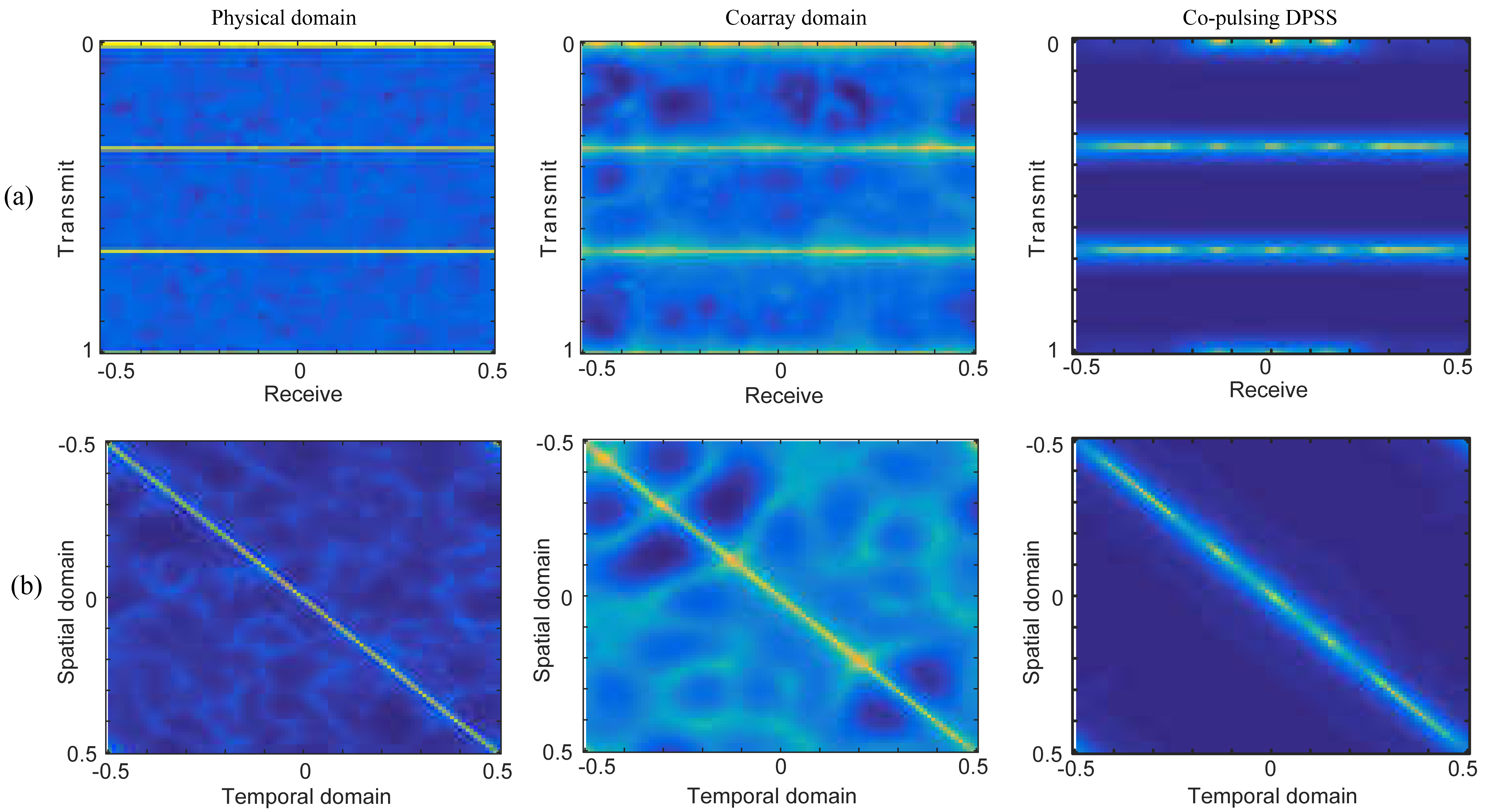}}
\caption{Clutter spectra of physical domain (left column), coarray domain (middle column), and co-pulsing DPSS (right column), respectively when $N_p=3$. Spectra in (a) transmit-receive domains and (b) spatial and temporal domains.
}
\label{Clutter_spectra_Np3}
\end{figure*}
\begin{figure*}[htp]
\centerline{\includegraphics[width=0.95\textwidth]{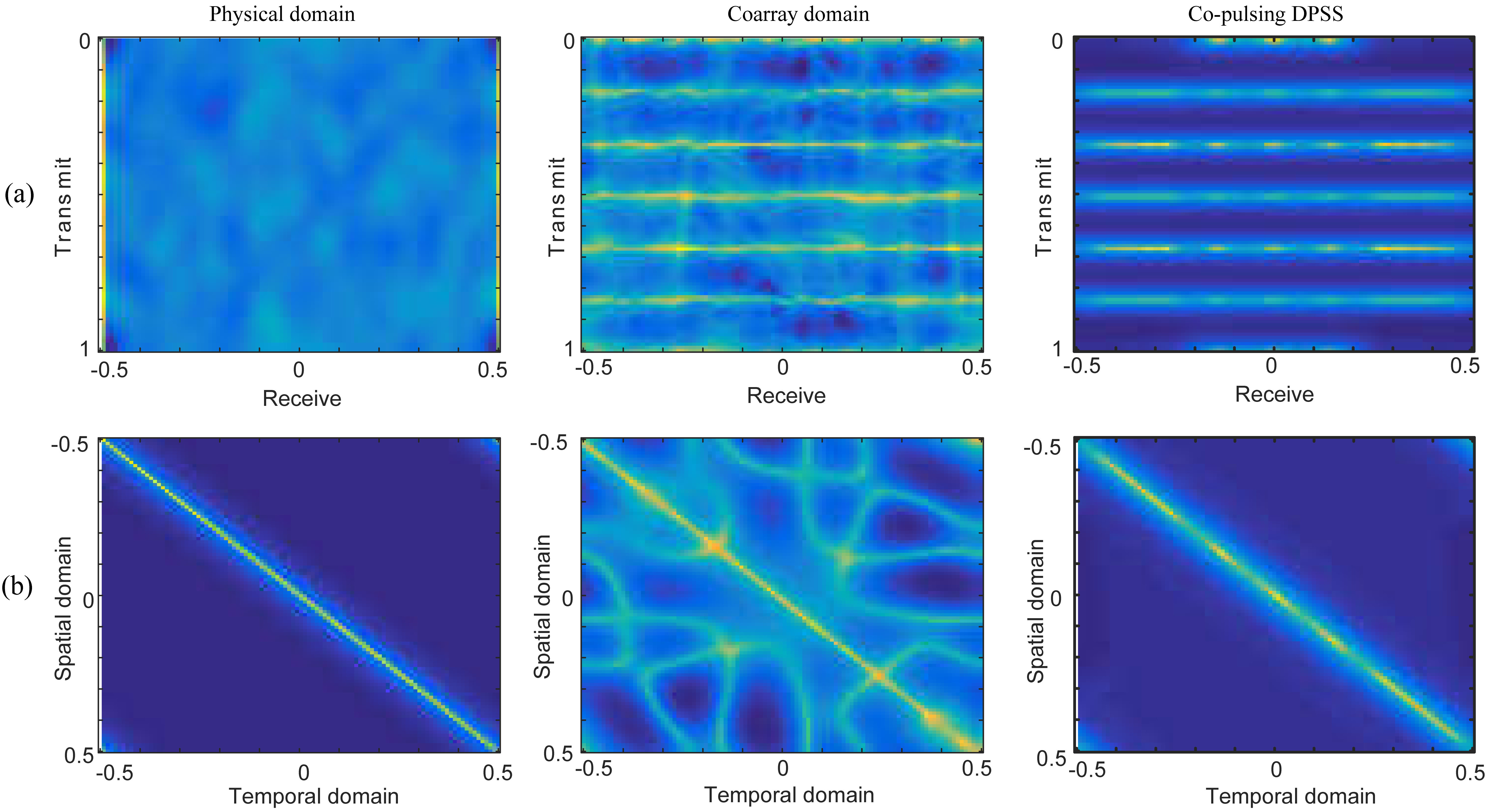}}
\caption{As in Fig. \ref{Clutter_spectra_Np3}, but for $N_p=6$.
}
\label{Clutter_spectra_Np6}
\end{figure*}

\begin{figure*}[ht]
\centerline{\includegraphics[width=0.95\textwidth]{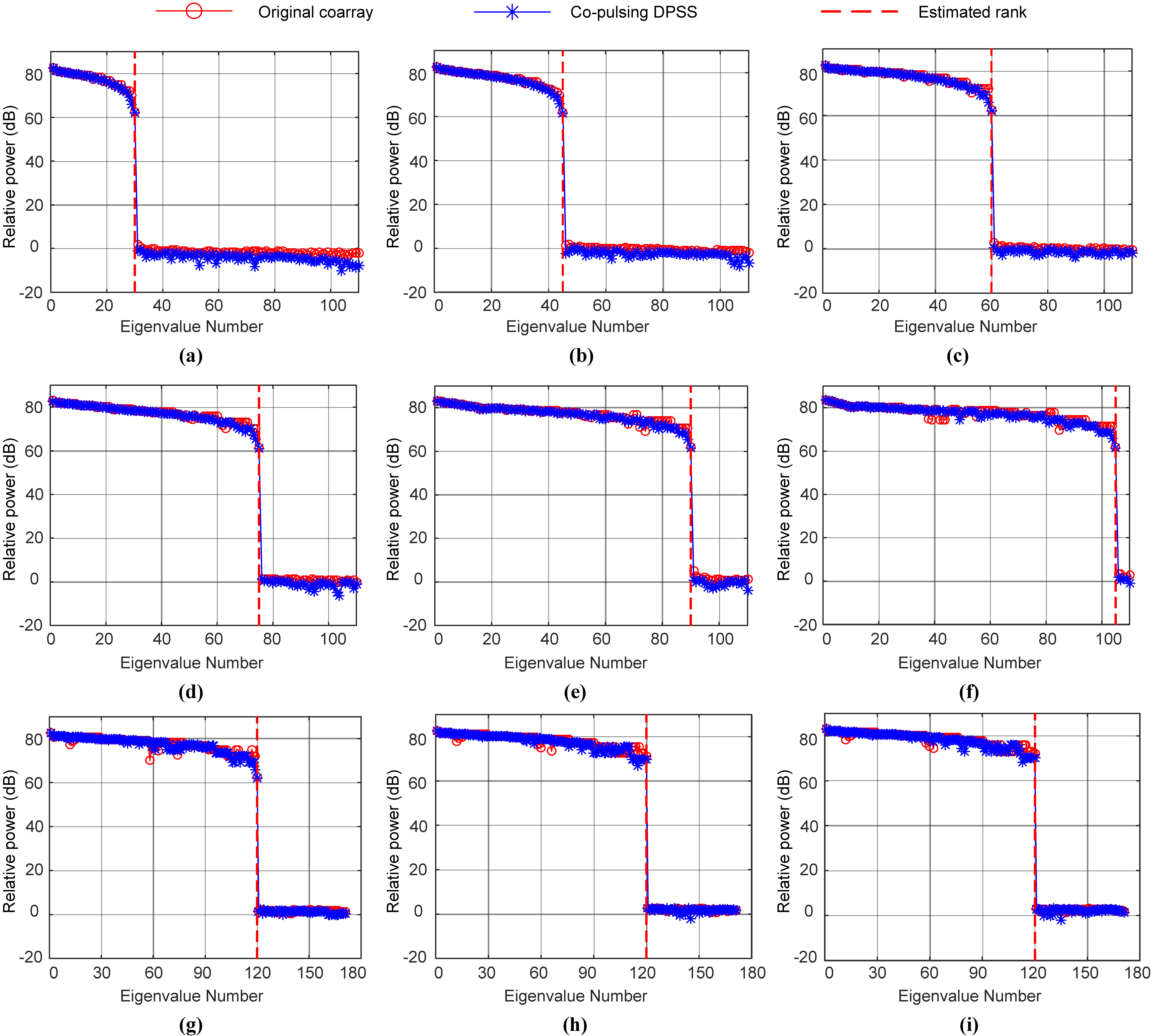}}
\caption{Clutter rank estimation results with different range ambiguities for $\beta=1$. (a) $N_p=2$. (b) $N_p=3$. (c) $N_p=4$. (d) $N_p=5$. (e) $N_p=6$. (f) $N_p=7$. (g) $N_p=8$. (h) $N_p=9$. (i) $N_p=10$.   
}
\label{Clutter_rank}
\end{figure*}

\begin{figure*}[ht]
\centerline{\includegraphics[width=0.95\textwidth]{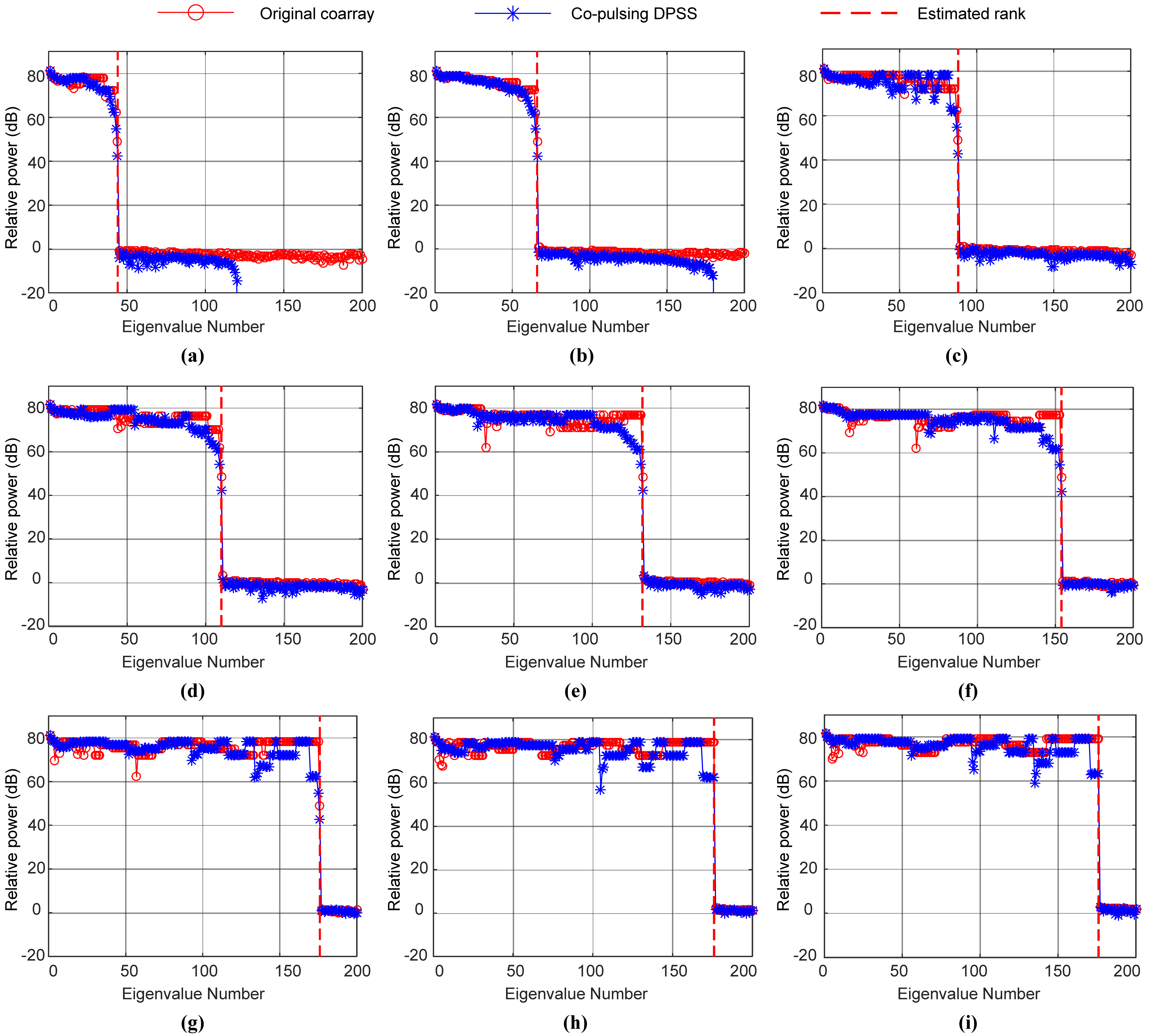}}
\caption{As in Fig.~\ref{Clutter_rank} but for $\beta=0.5$. 
}
\label{Clutter_rank_beta_0.5}
\end{figure*}

\noindent\textbf{Clutter rank estimation}: Next, we assessed the accuracy of the clutter rank prediction. Figures \ref{Clutter_rank} and \ref{Clutter_rank_beta_0.5} illustrate the clutter rank estimation results with different range ambiguities (marked with dotted red line) for $ v_{\mathrm{p}}= 150$ m/s ($\beta =1$) and $ v_{\mathrm{p}}= 75$ m/s ($\beta =0.5$), respectively. For comparison, we also plot the actual eigenvalue power of the coarray covariance matrix and co-pulsing DPSS-based approximated covariance matrix. The clutter rank values with different methods corresponding to Figures \ref{Clutter_rank} and \ref{Clutter_rank_beta_0.5} are displayed in Table \ref{tbl:Clutter_rank_value} to clearly show the accuracy of our proposed clutter rank estimation method. It follows from Figures \ref{Clutter_rank}-\ref{Clutter_rank_beta_0.5} and Table \ref{tbl:Clutter_rank_value} that our proposed rank prediction criteria in (\ref{eq:3-D_18}) accurately estimates clutter rank for different range ambiguities.  When $N_p \leq 8$, the clutter rank is proportional to the number of range ambiguities. However, when $N_p>8$, the clutter rank is a constant because the maximum resolvable number of range ambiguities in our proposed CoSTAP is restricted by $L_s+1$. The eigenvalue power curves of the original coarray covariance matrix closely follow that of the co-pulsing DPSS-based approximated version indicating that our proposed co-pulsing DPSS-based method accurately approximates the 3-D clutter subspace. 

    \begin{table*}
    \caption{The Estimated Clutter Rank Values for Different Methods}
    \label{tbl:Clutter_rank_value}       
    \centering
    \begin{threeparttable}
    \begin{tabular}{P{1cm}P{2.3cm}P{1cm}P{1cm}P{1cm}P{1cm}P{1cm}P{1cm}P{1cm}P{1cm}P{1cm}}
    \hline\noalign{\smallskip}
    ~   & $N_p$ & 2 & 3 & 4 & 5 & 6 & 7 & 8 & 9 & 10
    \\
    \noalign{\smallskip}
    \hline
    \noalign{\smallskip}
                & Original coarray  & 30 & 45 & 60 & 75 & 90 & 105 & 120 & 120 & 120 \\
                 \noalign{\smallskip}
    $\beta = 1$ & Co-pulsing DPSS   & 30 & 45 & 60 & 75 & 90 & 105 & 120 & 120 & 120 \\
     \noalign{\smallskip}
                & Estimated rank (\ref{eq:3-D_18})    & 30 & 45 & 60 & 75 & 90 & 105 & 120 & 120 & 120 \\
                 \noalign{\smallskip}
    \hline
    \noalign{\smallskip}
                & Original coarray  & 44 & 66 & 88 & 110 & 132 & 154 & 176 & 176 & 176 \\
                 \noalign{\smallskip}
 $\beta = 0.5$ & Co-pulsing DPSS    & 44 & 66 & 88 & 110 & 132 & 154 & 176 & 176 & 176 \\
                  \noalign{\smallskip}
                & Estimated rank (\ref{eq:3-D_18})    & 44 & 66 & 88 & 110 & 132 & 154 & 176 & 176 & 176 \\
    \noalign{\smallskip}\hline\noalign{\smallskip}
    \end{tabular}
    \end{threeparttable}
    \end{table*}

\noindent\textbf{Performance of CoSTAP}: We tested the SINR performance of different STAP methods through a full-dimensional (FD) processing. We compared the uniform FD STAP in physical domain, standard co-pulsing FD STAP, and DPSS-based co-pulsing FD STAP methods in coarray domain. We adopted the SMI-MVDR algorithm as a benchmark here because it provides good performance when the number of i.i.d. training samples are large. We fixed the number of i.i.d. training samples to 500. Fig. \ref{STAP_performance} shows that both of our proposed co-pulsing FD STAP methods achieve better performance with at least $4$ dB higher gain in the output SINR than uniform FD STAP. This is because co-pulsing FDA radar entails more virtual sensor elements and pulses in coarray domain than those of uniform FDA in physical domain. In addition, comparing the DPSS-based co-pulsing FD STAP and normal co-pulsing FD STAP shows that the former exhibits about $8$ dB SINR loss even though it enjoys the advantage of low complexity. Fig. \ref{runtime_comp} illustrates the computational complexity of all three STAP methods by plotting the resultant runtime versus number of physical sensors. It follows from Figures \ref{STAP_performance}-\ref{runtime_comp} that our proposed DPSS-based STAP method provides a balance between the output SINR performance and computational complexity.

\begin{figure}[t]
\centerline{\includegraphics[width=0.48\textwidth]{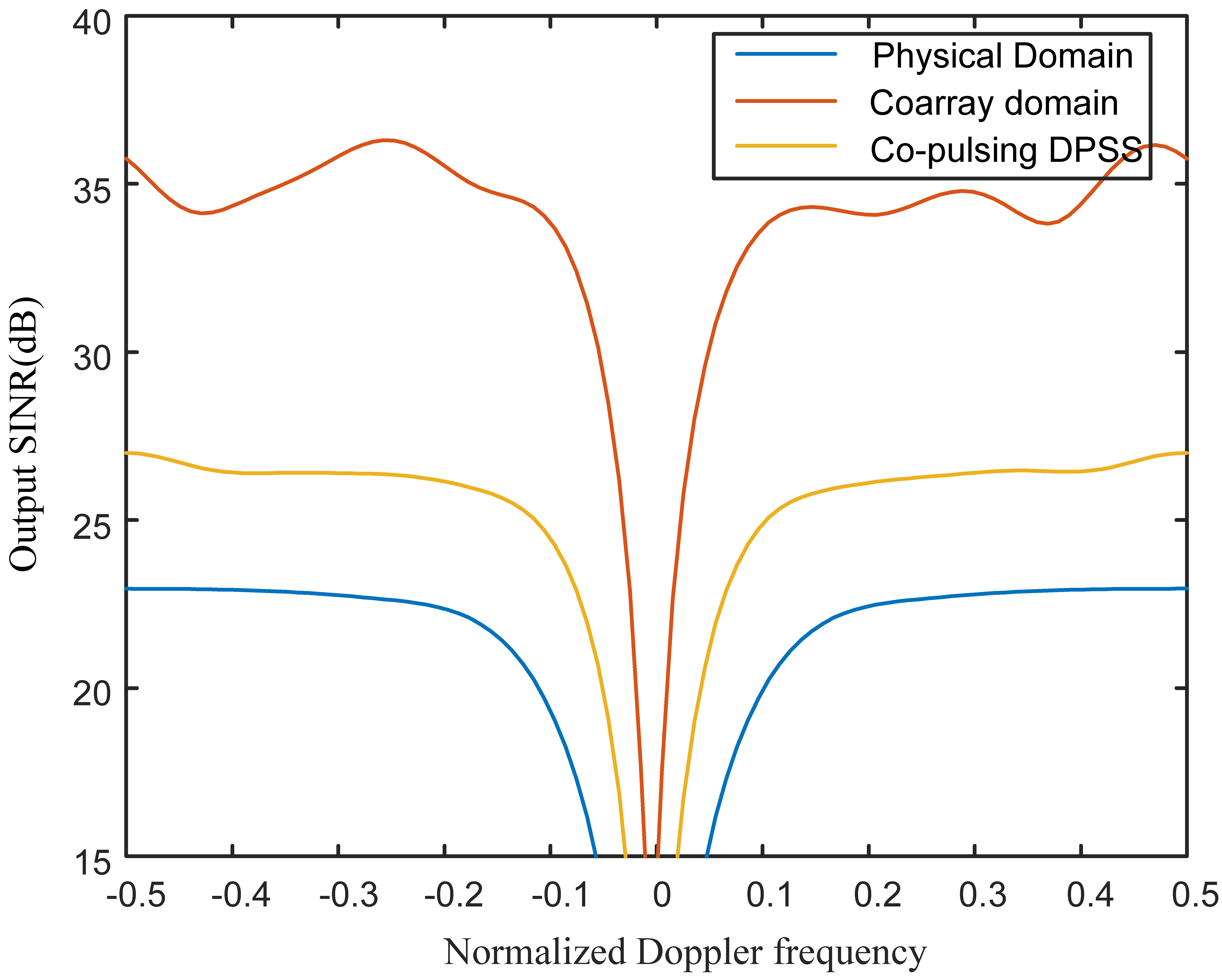}}
\caption{SINR comparison across different STAP methods.  
}
\label{STAP_performance}
\end{figure}

\begin{figure}[t]
\centerline{\includegraphics[width=0.47\textwidth]{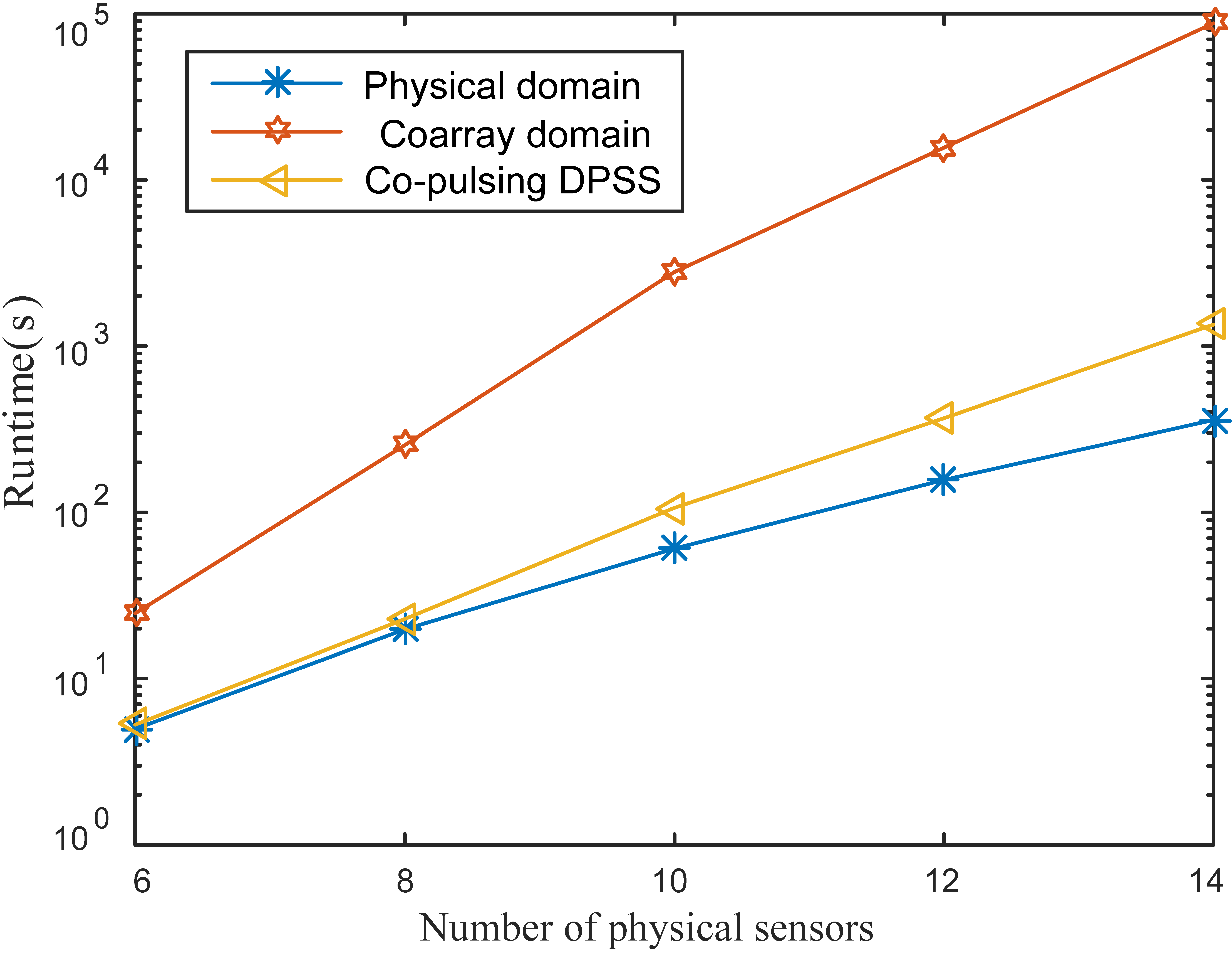}}
\caption{Runtime for different STAP methods with respect to the number of physical array elements.  }
\label{runtime_comp}
\end{figure}

\noindent\textbf{Robustness to interference}: Assume that unexpected interferences are located in the continuous region $\mathcal{R}(\mathbf{f}_0,\bm{\Delta})$. Here, we set $\mathbf{f}_0=[0.1,-0.3,0.3]$ and $\bm{\Delta}=[\frac{1}{L_s+1},\frac{1}{L_t+1},\frac{1}{L_s+1}]=[1/7,1/7,1/7]$. The range, Doppler, and angle are uniformly distributed on the region $\mathcal{R}(\mathbf{f}_0,\bm{\Delta})$ for each of the interference components. The interference-to-noise ratio was set at 30 dB implying that the interference is much stronger than clutter. Fig. \ref{Robustness_to_interference}(a) depicts the clutter and interference spectra prior to any rejection. We observe that strong interference exists in $\mathcal{R}(\mathbf{f}_0,\bm{\Delta})$ which may affect clutter subspace. Fig. \ref{Robustness_to_interference}(b) depicts the spectra produced after applying the interference rejection algorithm over the highlighted interference region. Clearly, our proposed 3-D subspace rejection method effectively suppresses the undesired interference in the 3-D region. Fig. \ref{STAP_rejection} exhibits the marked improvement of SINR before and after clutter subspace rejection, especially in the clustered interference region.
\begin{figure}[t]
\centerline{\includegraphics[width=0.5\textwidth]{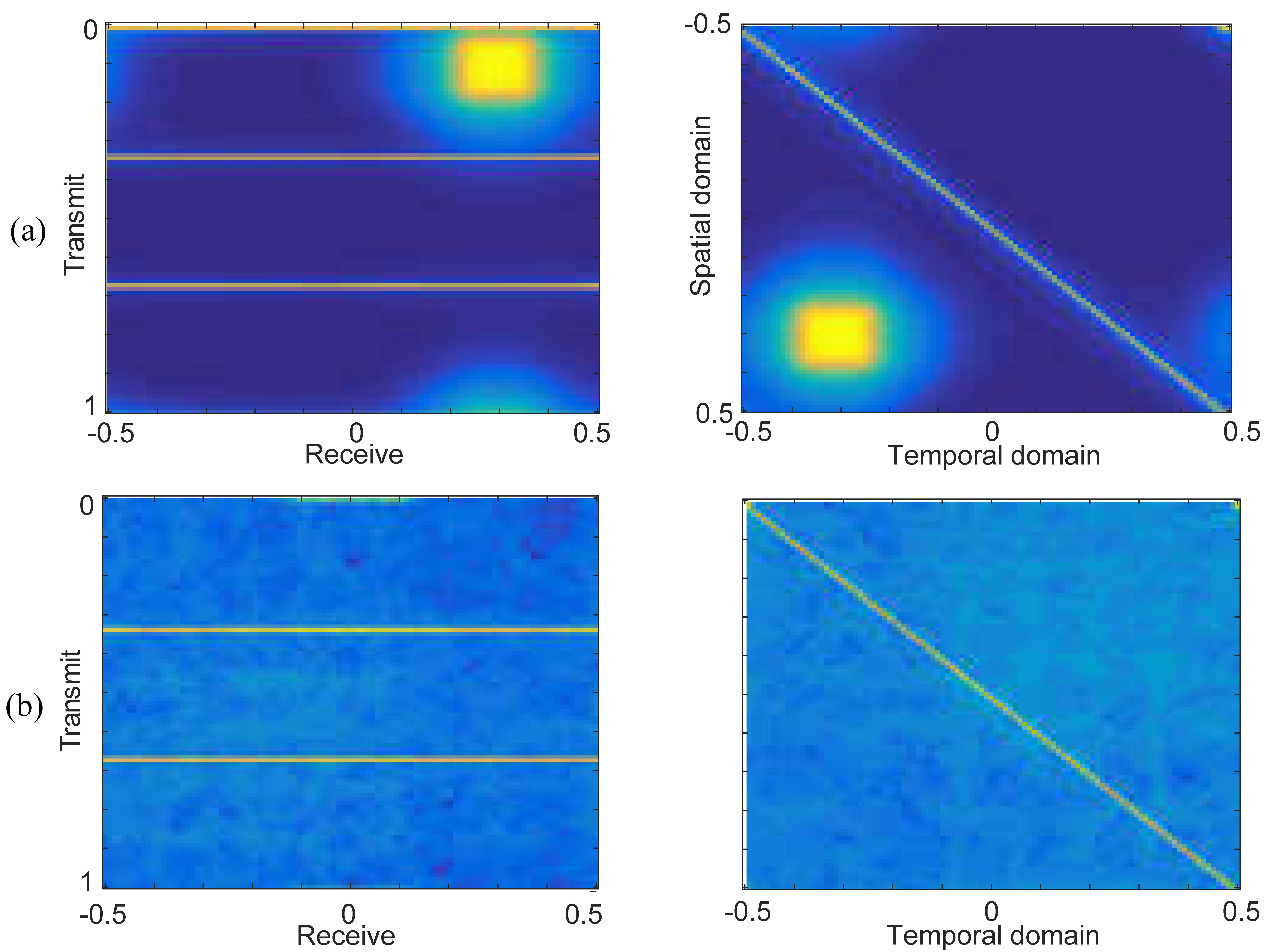}}
\caption{Clutter and interference spectra in the coarray domain (a) before and (b) after applying the interference rejection algorithm over the yellow highlighted interference region.
}
\label{Robustness_to_interference}
\end{figure}

\begin{figure}[t]
\centerline{\includegraphics[width=0.48\textwidth]{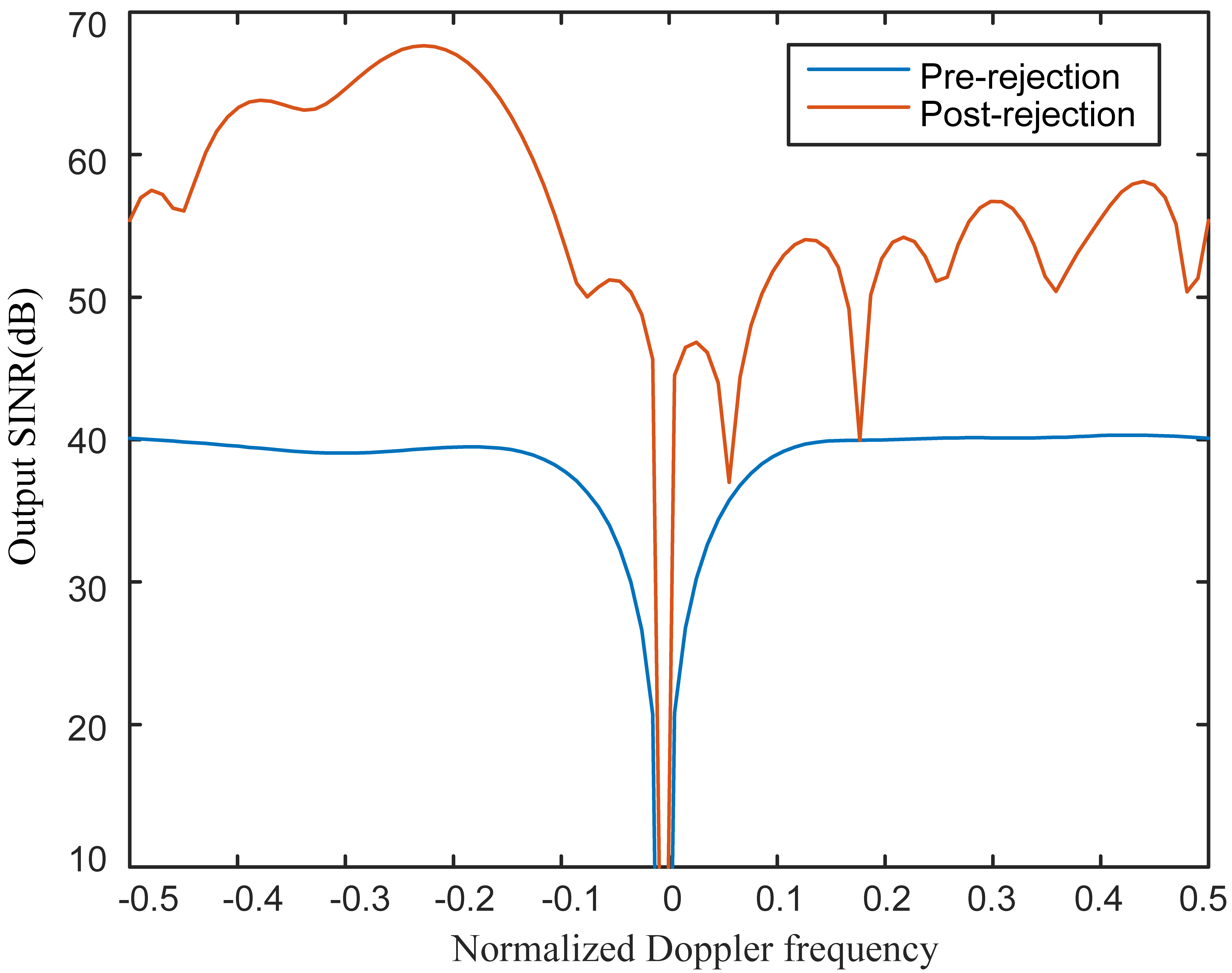}}
\caption{Output SINR (a) before and (b) after clutter subspace rejection.
}
\label{STAP_rejection}
\end{figure}


\section{Summary}                     
\label{sec:Conclusions}
We explored a novel STRAP filter for clutter suppression based on co-pulsing FDA radar, which offers the advantage of a larger aperture and increased DoFs. This leads to improved range ambiguity resolvability and better output SINR performance compared to uniform counterparts. Considering that the increased DoFs in the space-time-range correlation domain result in a significantly higher computational burden, we incorporated a DPSS-based subspace approximation scheme to achieve a low-complexity adaptive processing algorithm. To accurately determine the size of the 3-D clutter subspace, we developed a robust clutter rank estimation approach that provides a closed-form solution. Simulation results demonstrated a good estimate of clutter rank and showed that our proposed co-pulsing methods offer enhanced 3-D clutter characterization capability and superior clutter suppression performance compared to the ULA counterpart in the physical domain. 

While our approach focused on a linear co-prime FDA, this work may be extended to CoSTAP for 2-D co-prime FDA, such as L-shaped co-prime FDA \cite{WLv2022}, where more DoFs can be utilized. DPSS-based approach may also be utilized for the standard FDA-MIMO STAP \cite{JXu2017}, which is a special case of our setup as stated in Theorem~\ref{theo_P}. 

\appendices
\section{}
\label{app:Intro_lemma1}
\begin{lemma}                   \label{lemma_1}
If $N \geq L_t+1$ or $M \geq L_s+1$, the number of distinct entries of the matrix $\bm{\Omega}_s$ defined in (\ref{eq:3-D_20_5}) is $(L_s+1)(L_t+1)$, namely all the entries of $\bm{\Omega}_s$ are distinct.
\end{lemma}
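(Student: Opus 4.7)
The claim is essentially a coprimality statement about the affine form $iM+jN$. Observe that the $(i,j)$-th entry of $\bm{\Omega}_s$ (with rows indexed by $i=0,\dots,L_t$ and columns by $j=0,\dots,L_s$) equals $iM+jN$. So the lemma reduces to showing that the map
\begin{equation*}
(i,j) \longmapsto iM + jN, \qquad 0\le i\le L_t,\; 0\le j\le L_s,
\end{equation*}
is injective whenever $N\ge L_t+1$ or $M\ge L_s+1$. This is a classical coprime-linear-form argument, so my plan is to proceed by direct contradiction.

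First, I would suppose two distinct index pairs $(i_1,j_1)\neq (i_2,j_2)$ produced the same value, i.e.\ $i_1 M + j_1 N = i_2 M + j_2 N$. Rearranging gives
\begin{equation*}
(i_1-i_2)\,M \;=\; (j_2-j_1)\,N.
\end{equation*}
Since $\gcd(M,N)=1$, standard divisibility yields $N\mid (i_1-i_2)$ and $M\mid (j_2-j_1)$. This is the only real ingredient beyond bookkeeping.

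Next, I would split into the two hypothesized regimes. If $N\ge L_t+1$, then because $0\le i_1,i_2\le L_t$ we have $|i_1-i_2|\le L_t< N$; combined with $N\mid (i_1-i_2)$ this forces $i_1=i_2$. Substituting back into the defining equation gives $(j_2-j_1)N=0$, hence $j_1=j_2$, contradicting $(i_1,j_1)\neq (i_2,j_2)$. Symmetrically, if $M\ge L_s+1$, then $|j_2-j_1|\le L_s<M$ and $M\mid (j_2-j_1)$ force $j_1=j_2$, which then forces $i_1=i_2$, again a contradiction. Either way, the map is injective, and $\bm{\Omega}_s$ therefore has exactly $(L_s+1)(L_t+1)$ distinct entries.

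I do not anticipate a genuine obstacle here; the only subtlety is making sure the two bounds $|i_1-i_2|\le L_t$ and $|j_2-j_1|\le L_s$ are handled with strict inequalities against $N$ and $M$ respectively, which is exactly why the hypothesis is phrased as $N\ge L_t+1$ (not $N>L_t$ written ambiguously) and $M\ge L_s+1$. No further case analysis or appeal to the structure of $\bm{\Omega}_s$ is required beyond this coprimality-plus-range argument.
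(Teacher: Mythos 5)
Your proof is correct and follows essentially the same route as the paper's: assume two entries $l_1M+p_1N=l_2M+p_2N$ coincide, rearrange to $(l_2-l_1)M=(p_1-p_2)N$, and use $\gcd(M,N)=1$ together with the range bound $|l_2-l_1|\le L_t<N$ (resp. $|p_1-p_2|\le L_s<M$) to force equality of the index pairs. Your version is marginally more explicit in invoking $N\mid(i_1-i_2)$ and then deducing $j_1=j_2$ separately, whereas the paper phrases the same step as the irreducible fraction $M/N$ admitting no representation with denominator below $N$, but the argument is identical.
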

\begin{proof}
We provide the proof by assuming the contrary. Consider the circumstance that $N \geq L_t+1$. Suppose the $(l_1,p_1)$-th entry is the same as the $(l_2,p_2)$-th entry,  where $0\leq l_1 < l_2 \leq L_t < N$ and $0\leq p_2 < p_1 \leq L_s$. Thus, we have $l_1M+p_1N=l_2M+p_2N$, that is $M/N = (p_1-p_2)/(l_2-l_1)$. However, since $l_2-l_1 < N$,  this contradicts co-primality of $M$ and $N$. The proof for the case $M \geq L_s+1$ follows \textit{mutatis mutandis}. This completes the proof. 
\end{proof}

\section{}
\label{app:Intro_lemma2}
\begin{lemma}                   \label{lemma_2}
If $N < L_t+1$ and $M < L_s+1$, the number of holes in $\bm{\Omega}_s$ is $(M-1)(N-1)$.
\end{lemma}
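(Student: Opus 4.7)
The plan is to identify the holes of $\bm{\Omega}_s$ with the classical Frobenius gaps of the coprime pair $(M,N)$, exploiting the symmetry inherent in the bounded representation $k = lM + pN$ with $(l,p)\in[0,L_t]\times[0,L_s]$. Let $S$ denote the set of entries of $\bm{\Omega}_s$, set $\sigma(k)\triangleq L_tM+L_sN-k$, and observe that the involution $(l,p)\mapsto(L_t-l,\,L_s-p)$ on the box sends $S$ bijectively onto itself, hence $\sigma$ sends the set of holes onto itself.

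Recall the Sylvester--Frobenius theorem for coprime $M,N$: the set $G\subseteq\mathbb{Z}_{\ge 0}$ of integers \emph{not} expressible as $aM+bN$ with $a,b\ge 0$ satisfies $|G|=(M-1)(N-1)/2$ and $G\subseteq[1,\,MN-M-N]$. The central claim I will establish is that the holes in $[0,\,L_tM+L_sN]$ are precisely $G\cup\sigma(G)$. The easy direction is immediate: if $k=lM+pN$ with $(l,p)$ in the box, then $k\notin G$, and $\sigma(k)=(L_t-l)M+(L_s-p)N\notin G$, so every hole must belong to $G\cup\sigma(G)$.

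The reverse direction --- that every $k\notin G$ with $\sigma(k)\notin G$ is actually a hit --- is the main obstacle. I would fix the canonical representation $k=a^*M+b^*N$ with $a^*\in[0,N-1]$ and $b^*\ge 0$, which exists since $k\notin G$. The hypothesis $N\le L_t$ makes $a^*\le L_t$ automatic, so the real question is whether some shift $(a^*+jN,\,b^*-jM)$, $j\in\mathbb{Z}$, lands in the box; equivalently, whether the intersection
\[
I_k \;\triangleq\; \Bigl[-\tfrac{a^*}{N},\,\tfrac{L_t-a^*}{N}\Bigr]\ \cap\ \Bigl[\tfrac{b^*-L_s}{M},\,\tfrac{b^*}{M}\Bigr]
\]
contains an integer. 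A four-way case split on whether $k$ is above or below $NL_s$ and $ML_t$ does the work. When $NL_s\le k\le ML_t$, $I_k$ is the interval $[(b^*-L_s)/M,\,b^*/M]$ of length $L_s/M\ge 1$ (using $M\le L_s$), hence contains an integer; the symmetric sub-case $ML_t<k<NL_s$ gives $I_k=[-a^*/N,\,(L_t-a^*)/N]$ of length $L_t/N\ge 1$. When $k<NL_s$ and $k\le ML_t$, the choice $j=0$ lies in $I_k$ because $-a^*/N\le 0$ and $b^*\le k/N<L_s$, yielding the valid pair $(l,p)=(a^*,b^*)$. The remaining case $k\ge NL_s$, $k>ML_t$ is reduced to the previous one by applying the argument to $\sigma(k)$, which satisfies $\sigma(k)<NL_s$ and $\sigma(k)\le ML_t$ and is Frobenius-representable by hypothesis; the hit obtained for $\sigma(k)$ reflects, via the involution, to a hit for $k$.

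A final bound closes the count. Under $M\le L_s$ and $N\le L_t$ one has $L_tM+L_sN\ge 2MN>2(MN-M-N)$, so $G\subseteq[1,\,MN-M-N]$ and $\sigma(G)\subseteq[L_tM+L_sN-(MN-M-N),\,L_tM+L_sN-1]$ occupy disjoint intervals. Hence $G\cap\sigma(G)=\emptyset$, and the number of holes equals $|G\cup\sigma(G)|=2|G|=(M-1)(N-1)$, which is the claim.
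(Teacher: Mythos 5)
Your proof is correct and follows essentially the same route as the paper's: both identify the holes with the Sylvester--Frobenius gap set $G$ of the coprime pair $(M,N)$ together with its mirror image under $\sigma(k)= L_tM+L_sN-k$, and obtain $2\cdot(M-1)(N-1)/2$. The difference is rigor rather than strategy: the paper merely asserts that the middle of the range $[0,L_tM+L_sN]$ is contiguous and that the low-end holes coincide with those of the base case $L_t=N-1$, $L_s=M-1$, whereas you actually prove the hard containment --- every $k$ with $k\notin G$ and $\sigma(k)\notin G$ is attained --- via the integer-in-interval argument on $I_k$; your four cases are exhaustive, each one checks out (in particular the nesting of the two intervals in the first two cases and the reduction of the fourth case to the third through the involution), and the disjointness bound $2(MN-M-N)<L_tM+L_sN$ is valid under $M\le L_s$, $N\le L_t$. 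One wording slip to fix: in the ``easy direction'' you show that every hit lies outside $G\cup\sigma(G)$, which yields $G\cup\sigma(G)\subseteq\{\text{holes}\}$, yet you announce the opposite containment (``every hole must belong to $G\cup\sigma(G)$''); since that opposite containment is exactly what your subsequent case analysis establishes, the proof as a whole is complete, but the two conclusions should be swapped between the paragraphs.
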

\begin{proof}
Define $k \triangleq lM+pN$ where $l$ and $p$ are integers. Recall the property of co-prime pairs of integers in \cite{SQin2017_2}: With $l$ and $p$ restricted to $l\in [0,N-1]$ and $p\in [0,M-1]$, the integer $k  \in [0,M(N-1)+N(M-1)]$ has $MN$ distinct values. Then, the number of holes $N_{\mathrm{hole}}$ is 
\begin{align}
N_{\mathrm{hole}} &= M(N-1)+N(M-1)+1 - MN    \notag \\
&=(M-1)(N-1).
\end{align}
When $l\in [0,L_t]$ and $p\in [0,L_s]$ where $N < L_t+1$ and $M < L_s+1$, the values of $k$ are restricted to the range $[0,L_tM+L_sM]$ and the contiguous part is $[(M-1)(N-1),L_sN+L_tM-(M-1)(N-1)]$. Obviously, as the $L_s$ and $L_t$ increase, the right edge of the contiguous part also increases while the left edge is fixed. So, the hole positions below $(M-1)(N-1)$ are also fixed which are the same as those in case when $L_s=M-1$, $L_t=N-1$, whose number of holes is $(M-1)(N-1)/2$. Considering the hole positions in the range $[0,L_tM+L_sM]$ are symmetric, the total number of holes are $(M-1)(N-1)$. This completes the proof. 
\end{proof}

\bibliographystyle{IEEEtran}
\bibliography{main}

\end{document}